\algrenewcommand{\algorithmiccomment}[1]{// #1}
\newcommand{\crs}[1]{CRS($#1$)}
\newcommand{\crsv}[1]{CRSV($#1$)}
\newcommand{\cms}[1]{CMS($#1$)}
 \title{
%Minimizing Edge Crossings of Layered Graphs using Split Vertices\\[1ex]
% Crossing Removal in Two-Layer Drawings of Bipartite Graphs by Vertex Splitting\\[1ex]
% Planarizing Two-Layer Drawings of\\ Bipartite Graphs by Vertex Splitting\\[1ex]
% Improving 2-Layer Graph Drawings by Splitting Vertices\\[1ex]
Splitting Vertices in 2-Layer Graph Drawings
}
\author{Reyan~Ahmed}{Colgate University, Hamilton, NY, USA}{}{}{}
\author{Patrizio~Angelini}{John Cabot University, Rome, Italy}{}{}{}
\author{Michael~A.~Bekos}{University of Ioannina, Ioannina, Greece}{}{}{}
\author{Giuseppe~Di~Battista}{Roma Tre University, Rome, Italy}{}{}{}
\author{Michael~Kaufmann}{Universit\"at T\"ubingen, T\"ubingen, Germany}{}{}{}
\author{Philipp~Kindermann}{Universität Trier, Trier, Germany}{}{}{}
\author{Stephen~Kobourov}{University of Arizona, Tucson, AZ, USA}{}{}{}
\author{Martin~N\"ollenburg}{TU Wien, Vienna, Austria}{}{}{}
\author{Antonios~Symvonis}{NTUA, Athens, Greece}{}{}{}
\author{Ana\"is~Villedieu}{TU Wien, Vienna, Austria}{}{}{}
\author{Markus~Wallinger}{TU Wien, Vienna, Austria}{}{}{}
\authorrunning{R. Ahmed et al.}
\keywords{2-layer graph drawings, vertex splitting, edge crossings}
\begin{document}
\nolinenumbers

\maketitle

% \sptitle{Department: Head}
% \editor{Editor: Name, xxxx@email}

\begin{abstract}
Bipartite graphs model the relationships between two disjoint sets of entities in several applications and are naturally drawn as 2-layer graph drawings. In such drawings, the two sets of entities (vertices) are placed on two parallel lines (layers), and their relationships (edges) are represented by segments connecting vertices.
Methods for constructing 2-layer drawings often try to minimize the number of edge crossings. 
We use vertex splitting to reduce the number of crossings, by replacing selected vertices on one layer by two (or more) copies and suitably distributing their incident edges among these copies.
We study several optimization problems related to vertex splitting, either minimizing the number of crossings or removing all  crossings with fewest splits. While we prove that some variants are \NP-complete, we obtain polynomial-time algorithms for others. 
We run our algorithms on a benchmark set of bipartite graphs representing the relationships between human anatomical structures and cell types.

\end{abstract} 

\newpage

\section{Introduction}
\label{sec:introduction}
Multilayer networks are used in many applications to model complex relationships between different sets of entities in interdependent subsystems~\cite{mragkp-vamn-21}. When analyzing and exploring the interaction between two such subsystems $S_t$ and $S_b$, bipartite or 2-layer networks arise naturally. The nodes of the two subsystem are modeled as a bipartite vertex set $V=V_t \cup V_b$ with $V_t \cap V_b = \emptyset$, where $V_t$ contains the vertices of the first subsystem $S_t$ and $V_b$ those of $S_b$. The inter-layer connections between $S_t$ and $S_b$ are modeled as an edge set $E \subseteq V_t \times V_b$, forming a bipartite graph $G=(V_t \cup V_b, E)$. Visualizing this bipartite graph $G$ in a clear and understandable way is then a key requirement for designing tools for visual network analysis~\cite{pfhlev-mvelbg-18}. 

A layered graph is a natural representation of the relationships between different genomic components, such as cell types and genes/proteins.
Graphs of this type can be obtained using massively parallel sequencing in bulk  at the single-cell level~\cite{borner2021anatomical_bibtex}. 
B{\"o}rner et al.~\cite{borner2021anatomical_bibtex} studied different tasks expected to be performed on such graphs (e.g., finding neighbors, finding shared neighbors, etc.).

%2-layer or bipartite graphs are widely used for information visualization to demonstrate the relationship between two partitions. One desirable criterion to improve the readability of such visualization is to minimize the number of edge crossings.

In a \emph{2-layer graph drawing} of a bipartite graph %, with $V_t \cap V_b = \emptyset$ and $E \subseteq V_t \times V_b$, 
the vertices are drawn as points on two distinct parallel lines $\ell_t$ and $\ell_b$, and edges are drawn as straight-line segments~\cite{EadesW94}. The vertices in $V_t$ (\emph{top vertices}) lie on $\ell_t$ (\emph{top layer}) and those in $V_b$ (\emph{bottom vertices}) lie on $\ell_b$ (\emph{bottom layer}).
%In a \emph{2-layer graph drawing}, the vertices of a bipartite graph $G=(V_t \cup V_b, E)$ with $V_t \cap V_b = \emptyset$ and $E \subseteq V_t \times V_b$ are drawn as points on two distinct parallel lines (the \emph{layers}) $\ell_t$ and $\ell_b$ so that the vertices in $V_t$ (\emph{top vertices}) are drawn on $\ell_t$ (\emph{top layer}), the vertices in $V_b$ (\emph{bottom vertices}) are drawn on $\ell_b$ (\emph{bottom layer}) and the edges are drawn as straight-line segments~\cite{EadesW94nourl}.
% In a \emph{2-layer graph drawing}, the vertices of a bipartite graph $G=(V_t \cup V_b, E)$ with $V_t \cap V_b = \emptyset$ and $E \subseteq V_t \times V_b$ are drawn on two parallel lines (the \emph{layers}) and the edges are drawn as straight-line segments between the layers~\cite{DBLP:journals/tcs/EadesW94}. 
In addition to direct applications of 2-layer networks for modeling the relationships between two communities as mentioned above~\cite{pfhlev-mvelbg-18}, such drawings also occur in tanglegram layouts for comparing phylogenetic trees~\cite{szh-trptn-11} or as components in layered drawings of directed graphs~\cite{SugiyamaTT81}. % and between consecutive axes in hive plots~\cite{kbjm-hpavn-12}.
% between two communities~\cite{ms-tcahnd-15,pfhlev-mvelbg-18,dmrw-orlbgtvsa-12}.
%trees~\cite{bbbnos-dcbt-10,szh-trptn-11,FernauKP10nourl,CzabarkaSW19} or in layouts of networks highlighting the relationships between two communities~\cite{ms-tcahnd-15,pfhlev-mvelbg-18,dmrw-orlbgtvsa-12}.
%, either drawn on two parallel lines~\cite{ms-tcahnd-15,pfhlev-mvelbg-18} or on two concentric circles~\cite{dmrw-orlbgtvsa-12}.

Purchase~\cite{purchase97_bibtex} has studied different graph layout properties that have an impact on human understanding. In particular, it is known that edge crossings can significantly impact graph-based tasks, which is why minimizing the number of edge crossings (or  removing all edge crossings altogether) is of interest. 

The primary optimization goal for 2-layer graph drawings is to find permutations of one or both vertex sets $V_t$, $V_b$ to minimize the number of edge crossings.
% in the resulting layout. 
While the existence of a crossing-free 2-layer drawing can be tested in % linear time~\cite{emw-oacp-86,hs-ncn-72}.
linear time~\cite{emw-oacp-86}, the
crossing minimization problem is \NP-complete %~\cite{garey1983crossing}, 
even if the permutation of one layer is given~\cite{EadesW94}. % or the degree is at most 4~\cite{MunozUV01}, 
Hence, both fixed-parameter % algorithms~\cite{KobayashiT15nourl,DujmovicW04nourl} 
algorithms~\cite{KobayashiT15nourl} and approximation algorithms~\cite{DemestrescuF01} have been published.
Further, graph layouts on two layers have also been widely studied in the area of graph drawing beyond planarity~\cite{dlm-sgdbp-19}.
However, from a practical point of view, minimizing the number of crossings in 2-layer drawings may still result in visually complex drawings~\cite{jm-2scmpeha-97}. 
% planarity~\cite{DBLP:conf/soda/AngeliniLBFP21,adfs-2lkp-20,DBLP:conf/mfcs/BiedlC0MNR20,DBLP:journals/jgaa/BinucciCDGKKMT17,d-dsp-13,DBLP:journals/algorithmica/GiacomoDEL14,dlm-sgdbp-19}.

Hence, in this paper, as an alternative approach to construct readable 2-layer drawings, 
% of bipartite graphs, 
we study vertex splitting~\cite{em-vtl-95,Liebers-survey01,ekkllm-pstg-18,Knauer2016}. 
% In this paper, as an alternative way to obtain 2-layer drawings of bipartite graphs with no or few crossings, we study vertex splitting~\cite{em-vtl-95,Liebers-survey01,ekkllm-pstg-18}. 
The \emph{vertex-split} operation (or \emph{split}, for simplicity) for a vertex $v$ deletes $v$ from $G$, adds two new copies $v_1$ and $v_2$ (in the original vertex subset of $G$), and distributes the edges originally incident to $v$ among the two new vertices $v_1$ and $v_2$. 
Placing $v_1$ and $v_2$ independently in the 2-layer drawing can in turn reduce the number of crossings. 

Vertex splitting has been studied in the context of the \emph{splitting number} of an arbitrary graph $G$, which is the smallest number of vertex-splits needed to transform $G$ into a planar graph. The splitting number problem is \NP-complete, even for cubic graphs~\cite{faria_splitting_2001nourl}, but the splitting numbers of complete and complete bipartite graphs are known~\cite{HartsfieldJR85nourl,jackson_splitting_1984nourl}.
Vertex splitting has also been studied in the context of \emph{split thickness}, which is % defined as 
the minimum maximum number of splits per vertex to obtain a graph with a certain property, e.g., a planar graph or an interval graph~\cite{ekkllm-pstg-18}.
%,Knauer2016,h-mct-QJM90,sw-tinpg-JCTB83,jr-shepp-JRAM84}.\todo{mention caterpillar forest? cite ueckerdt result?}

In this paper we consider reducing or removing edge crossings by vertex splitting. Vertex splitting is a useful technique to visualize complex pathway graphs of biological mechanisms. Nielsen et al.~\cite{Nielsen19_bibtex} have proposed an approach using Machine Learning to facilitate the visualization of pathway graphs by training a Support Vector Machine with actions taken during manual biocuration. Henry et al.~\cite{Henry08_bibtex} have improved the readability of clustered social networks using vertex splitting. These studies demonstrate the importance of vertex splitting for visualizing complex networks.

We study variations of the algorithmic problem of constructing planar or crossing-minimal 2-layer drawings with vertex splitting.
In visualizing graphs defined on anatomical structures and cell types in the human body~\cite{github}, %\footnote{see \url{https://hubmapconsortium.github.io/ccf-asct-reporter/}}, 
the two vertex sets of $G$ play different roles and vertex splitting is permitted only on one side of the layout. 
This motivates %why we restrict 
our interest in splitting only the bottom vertices.
The top vertices may either be specified with a given context-dependent input ordering, e.g., alphabetically, following a hierarchy structure, or sorted according to an important measure, or we may be allowed to arbitrarily permute them to perform fewer vertex splits. 
%Also motivated from the above application, we consider a setting in which crossings are allowed but we aim for compact neighborhood ranges. In this setting, the goal is to minimize the maximum \emph{span} of the vertices on the top layer, i.e., the maximum distance between a first and a last neighbor in the order of the bottom vertices, given a budget of $k$ vertex-splits on the bottom layer.

\subsection{Contributions}
We prove that for a given integer $k$ it is \NP-complete to decide whether $G$ admits a planar 2-layer drawing with an arbitrary permutation on the top layer and at most $k$ vertex splits on the bottom layer (see \cref{th:crs-hard}). 
\NP-completeness also holds if at most $k$ vertices can be split, but each an arbitrary number of times (see \cref{th:crsv-hard}).

If, however, the vertex order of $V_t$ is given, then we present two linear-time algorithms to compute planar 2-layer drawings, one minimizing the total number of splits (see \cref{th:crs-fixed-algo}), and one minimizing the number of split vertices (see \cref{th:crsv-fixed-algo}). In view of their linear-time complexity, our algorithms may be useful for practical applications; we perform an experimental evaluation of the algorithm for \cref{th:crs-fixed-algo} using real-world data sets stemming from anatomical structures and cell types in the human body~\cite{github}.

We further study the setting in which the goal is to minimize the number of crossings (but not necessarily remove all of them) using a prescribed total number of splits. For this setting, we prove \NP-completeness even if the vertex order of $V_t$ is given (see \cref{th:crms-fixed-hard}). On the other hand, we provide an \XP-time algorithm parameterized by the number of allowed splits (see \cref{th:crms-fixed-algo}), which, in other words, means that the algorithm has a polynomial running time for any fixed number of allowed splits.

\section{Preliminaries}

We denote the order of the vertices in $V_t$ and $V_b$ in a $2$-layer drawing by $\pi_t$ and $\pi_b$, resp. If a vertex $u$ precedes a vertex $v$, then we denote it by $u \prec v$. Although $2$-layer drawings are defined geometrically, their crossings are fully described by $\pi_t$ and $\pi_b$, as in the following folklore lemma.

\begin{lemma}\label{lemma:crossing_free_position}
Let $\Gamma$ be a 2-layer drawing of a bipartite graph $G=(V_t \cup V_b, E)$. Let $(v_1,u_1)$ and $(v_2,u_2)$ be two edges of $E$ such that $v_1 \prec v_2$ in $\pi_t$. Then, edges $(v_1,u_1)$ and $(v_2,u_2)$ cross each other in $\Gamma$ if and only if $u_2  \prec u_1$ in $\pi_b$.
\end{lemma}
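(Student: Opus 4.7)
The plan is a direct coordinate computation. I would fix coordinates so that $\ell_t$ is the line $y=1$ and $\ell_b$ is the line $y=0$, and write $a_i$ (resp.\ $b_i$) for the $x$-coordinate of $v_i$ (resp.\ $u_i$), with $a_1 < a_2$ by the hypothesis $v_1 \prec v_2$ in $\pi_t$. The segment representing the edge $(v_i, u_i)$ is then parametrized by $x_i(y) = a_i\,y + b_i(1-y)$ for $y \in [0,1]$. Since the top endpoints $v_1, v_2$ are distinct, an interior crossing of the two segments corresponds precisely to a solution $y \in (0,1)$ of the linear equation $x_1(y) = x_2(y)$, which is equivalent to a single scalar equation in $y$.

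I would then solve this equation and analyze when the unique solution lies in the open unit interval. Setting $\alpha = a_2 - a_1 > 0$ and $\beta = b_1 - b_2$, the solution simplifies to $y^\star = \beta/(\alpha + \beta)$, and a short sign check shows that $y^\star \in (0,1)$ iff $\beta > 0$, i.e., iff $b_2 < b_1$, which is exactly $u_2 \prec u_1$ in $\pi_b$. Both directions of the stated equivalence follow at once. The only edge case worth mentioning is $u_1 = u_2$, in which the two edges merely share a common bottom endpoint and therefore do not cross in their interiors; this is consistent with $u_2 \not\prec u_1$ under the strict order $\prec$. I do not anticipate any genuine obstacle here: the lemma is called folklore precisely because the whole argument reduces to a single linear equation between two parallel lines, with no combinatorial subtlety beyond the sign analysis.
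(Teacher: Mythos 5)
Your argument is correct, but note that the paper offers no proof of this lemma at all: it is stated as a folklore fact immediately after the definition of $2$-layer drawings, so there is nothing to compare your approach against. Your coordinate computation is a legitimate and complete elementary verification. The parametrization $x_i(y)=a_iy+b_i(1-y)$ and the reduction to $y^\star=\beta/(\alpha+\beta)$ with $\alpha=a_2-a_1>0$, $\beta=b_1-b_2$ are right, and the sign analysis does give $y^\star\in(0,1)$ exactly when $\beta>0$, i.e., $u_2\prec u_1$. One small point to tighten: when $\alpha+\beta=0$ the two segments are parallel and your formula for $y^\star$ is undefined, so this case should be dispatched separately (there $\beta=-\alpha<0$ and the parallel segments are disjoint, so the equivalence still holds); likewise the shared-endpoint case $u_1=u_2$ gives $y^\star=0\notin(0,1)$, which you correctly flag. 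With those two degeneracies mentioned explicitly, the proof is airtight and arguably more self-contained than the paper's bare appeal to folklore.
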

%\begin{IEEEproof}
%Without loss of generality let us assume position($p$) $\leq$ position($q$) and position($w$) $\leq$ position($x$). If position($w$) $<$ position($q$), then the drawing of the induced graph will incur a crossing between edge $(u, q)$ and $(v, w)$, see Figure~\ref{fig:crossing_position}.
%\begin{figure}
%    \centering
%    \includegraphics[width=.5\linewidth]{figures/crossing_position.png}
%    \caption{Illustrating the positional condition of edge crossings.}
%    \label{fig:crossing_position}
% \end{figure}
%\end{IEEEproof}

\noindent In the following we formally define the problems we study. %, which are also inspired by the notions of splitting number and split thickness. 
For all of them, the input contains a bipartite graph $G=(V_t \cup V_b, E)$ and a split parameter $k$.

% \begin{definition}[Crossing minimization using $k$ splits (CMS($k$))]
% Given a bipartite graph $G=((V_t, V_b), E)$, an ordering of the top vertex set $V_t$, and a split parameter $k$ compute a layered drawing that minimizes the number of edge crossings without using more than total $k$ vertex splits of the bottom vertices in $V_b$.
% \end{definition}

\begin{itemize}
\item\textbf{Crossing Removal with $\boldsymbol{k}$ Splits  -- \crs{\boldsymbol k}}: Decide if there is a planar 2-layer drawing of $G$ after applying at most $k$ vertex-splits to the vertices~in~$V_b$.
\item\textbf{Crossing Removal with $\boldsymbol{k}$ Split Vertices  -- \crsv{\boldsymbol k}}: Decide if there~is a planar 2-layer drawing of $G$ after splitting  at most $k$ original vertices of $V_b$.
%\item\textbf{Crossing Removal with $\boldsymbol{k}$ Splits per Vertex  -- \crspv{\boldsymbol k}}: Decide if there~is a planar 2-layer drawing of $G$ after splitting each original vertex of $V_b$ at most $k$ times.
\item\textbf{Crossing Minimization with $\boldsymbol{k}$ Splits  -- \cms{\boldsymbol k, \boldsymbol M}}: Decide if there is a 2-layer drawing of $G$ with at most $M$ crossings after applying at most $k$ vertex-splits to the vertices~in~$V_b$, where $M$ is an additional integer specified as part of the input.
%\item\label{def:mss} \textbf{Maximum Span $\boldsymbol{h}$ with $\boldsymbol{k}$ Splits -- MSS($\boldsymbol{h}$, $\boldsymbol{k}$)}: Decide if there is a 2-layer drawing of $G$ where the maximum span of the vertices in $V_t$ is at most $h$ after applying at most $k$ vertex splits to the vertices in $V_b$.
\end{itemize}

Note that in \crsv{k}, once we decide to split an original vertex, then we can further split its copies without incurring any additional cost. The example in \cref{fig:CRSvsCRSVa}--\ref{fig:CRSvsCRSVc} demonstrates the difference between CRS and CRSV.

% Note that in P.\ref{def:crsv}, once we decide that an original vertex is split, then we can split its copies without incurring any additional cost.
%We also study two variants of this problem, where the value to be minimized is not the total number of splits, but
%\begin{enumerate}
%    \item the number of vertices that are split at least once -- Crossing Removal with $k$ Split Vertices  - \crsv{k}, and 
%    \item the maximum number of splits per vertex -- Crossing Removal with $k$ Max Split  - CRMS($k$). 
%\end{enumerate}
For all problems, we refer to the variant where the order $\pi_t$ of the vertices in $V_t$ is given as part of the input by adding the suffix \emph{``with Fixed Order''}.
% For each of the problems, we refer to the variant where the order $\pi_t$ of the vertices in $V_t$ is given as part of the input by adding the suffix \emph{``with Fixed Order''}.
%

\begin{figure*}
    \centering
    \begin{subfigure}[c]{0.32\textwidth}
    \centering
    \includegraphics[page=1]{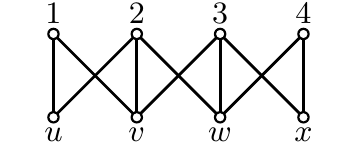}
    \caption{Instance $G$.}
    \label{fig:CRSvsCRSVa}
    \end{subfigure}\hfill
    \begin{subfigure}[c]{0.32\textwidth}
    \centering
    \includegraphics[page=2]{optimalsolution}
    \caption{Optimal CRS solution of $G$ with three splits of three different vertices.}
    \label{fig:CRSvsCRSVb}
    \end{subfigure}\hfill
    \begin{subfigure}[c]{0.32\textwidth}
    \centering
    \includegraphics[page=3]{optimalsolution}
    \caption{Optimal CRSV solution of $G$ with two split vertices.}
    \label{fig:CRSvsCRSVc}
    \end{subfigure}
%    \begin{subfigure}[c]{0.48\textwidth}
%    \includegraphics[width=\textwidth]{}
%    \caption{ }
%    \end{subfigure}
    \begin{subfigure}[c]{0.32\textwidth}
    \centering
    \includegraphics[page=1]{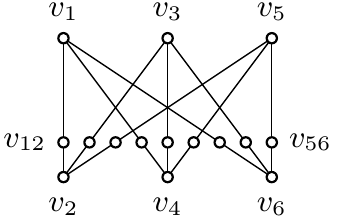}
    \caption{Subdivided graph $G^\prime$.}
    \label{fig:hamiltoniana}
    \end{subfigure}\hfill
    \begin{subfigure}[c]{0.32\textwidth}
    \centering
    \includegraphics[page=2]{hamiltonian}
    \caption{Instance of \crs{k} for $G'$.}
    \label{fig:hamiltonianb}
    \end{subfigure}\hfill
    \begin{subfigure}[c]{0.32\textwidth}
    \centering
    \includegraphics[page=3]{hamiltonian}
    \caption{Splits are minimized if and only if $G$ has a Hamiltonian path.}
    \label{fig:hamiltonianc}
    \end{subfigure}
    
    \begin{subfigure}[c]{0.32\textwidth}
    \centering
    \includegraphics[page=2, width=.9\textwidth]{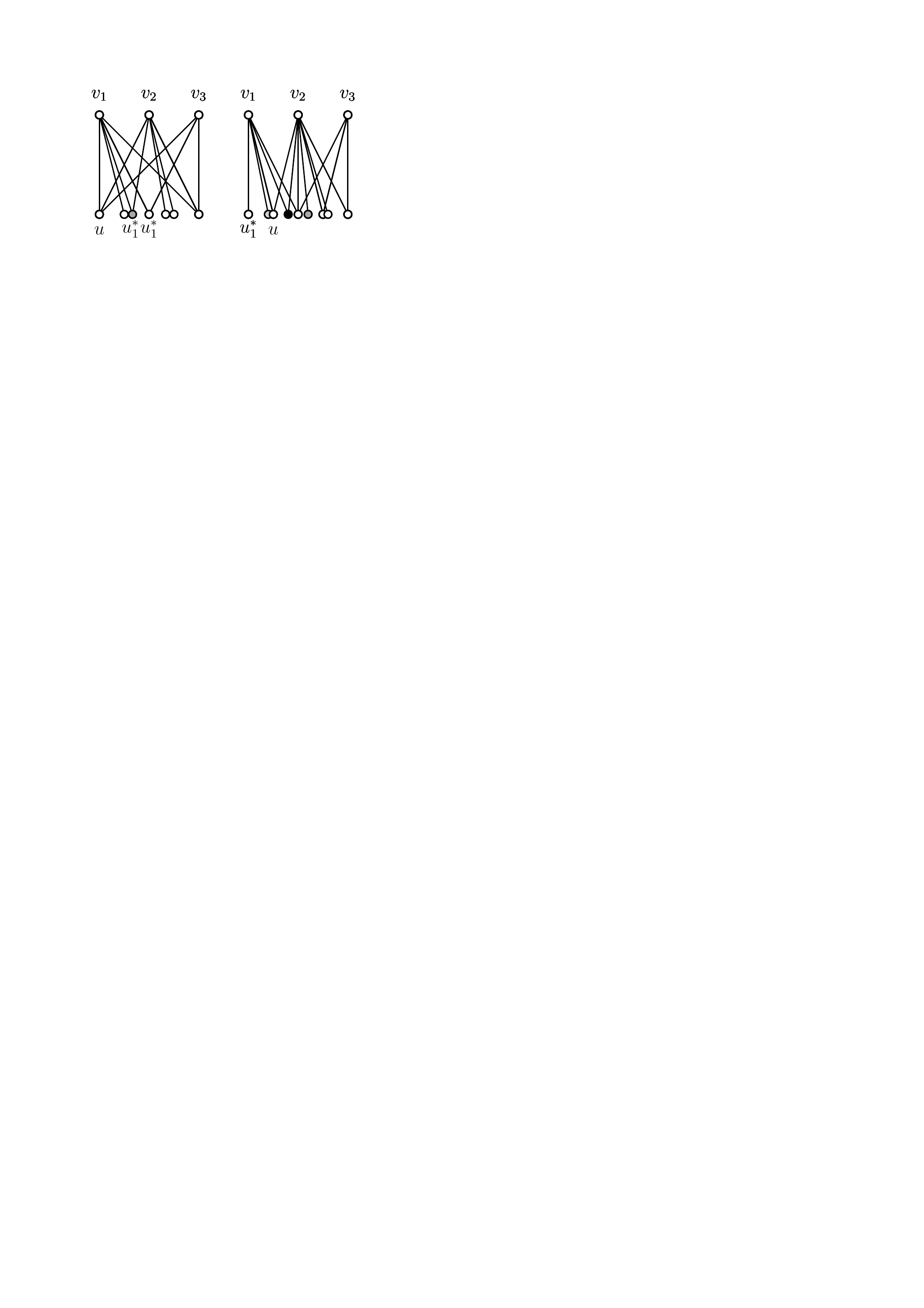}
    \caption{Case 1.}
    \label{fig:CRSOpta}
    \end{subfigure}\hfill
    \begin{subfigure}[c]{0.32\textwidth}
    \centering
    \includegraphics[page=3, width=.9\textwidth]{figures/ipe/algorithm.pdf}
    \caption{Case 2.}
    \label{fig:CRSOptb}
    \end{subfigure}\hfill
    \begin{subfigure}[c]{0.32\textwidth}
    \centering
    \includegraphics[page=4, width=.9\textwidth]{figures/ipe/algorithm.pdf}
    \caption{Case 3.}
    \label{fig:CRSOptc}
    \end{subfigure}
    
    \caption{Differences between CRS and CRSV problems (a)--(c); illustrations for the reduction in \cref{th:crs-hard} (d)--(f); illustrations for the optimization algorithm for \crs{k} with Fixed Order, where vertices in $N^+$ are colored in shades of gray (g)--(i).
    }
    %\label{fig:CRSvsCRSV}
\end{figure*}

\begin{comment}
\begin{figure}
    \centering
    \begin{subfigure}[c]{0.32\textwidth}
    \centering
    \includegraphics[page=1]{optimalsolution}
    \caption{ }
    \end{subfigure}\hfill
    \begin{subfigure}[c]{0.32\textwidth}
    \centering
    \includegraphics[page=2]{optimalsolution}
    \caption{ }
    \end{subfigure}\hfill
    \begin{subfigure}[c]{0.32\textwidth}
    \centering
    \includegraphics[page=3]{optimalsolution}
    \caption{ }
    \end{subfigure}
%    \begin{subfigure}[c]{0.48\textwidth}
%    \includegraphics[width=\textwidth]{ipe/fig1_4.pdf}
%    \caption{ }
%    \end{subfigure}
    \caption{(a) Instance $G$. (b) 
    A solution with three splits, involving three different vertices, that is optimal for CRS. % and CRSpV. 
    (c) Optimal CRSV solution with two split vertices. 
    %(d) An optimal solution for both problems, with a total of three splits involving two vertices of $V_b$.
    }
    \label{fig:CRSvsCRSV}
\end{figure}
\end{comment}
The following lemma implies conditions under which a vertex split must occur.

\begin{lemma}\label{le:split-condition}
Let $G=(V_t \cup V_b, E)$ be a bipartite graph and let $u \in V_b$ be a bottom vertex adjacent to two top vertices $v_1,v_2 \in V_t$, with $v_1 \prec v_2$ in $\pi_t$. In any planar 2-layer drawing of $G$ in which $u$ is not split, we have that:
\begin{enumerate}[C.1]
    \item\label{cond-split:between} %Every top vertex that appears between $v_1$ and $v_2$ in $\pi_t$, if any, is not adjacent to any vertex different from $u$;
    A top vertex that appears between $v_1$ and $v_2$ in $\pi_t$ can only be adjacent to $u$;
    \item\label{cond-split:first-last} In $\pi_b$, $u$ is the last neighbor of $v_1$ and the first neighbor of $v_2$.
\end{enumerate}
\end{lemma}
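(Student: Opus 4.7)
The plan is to derive both conditions directly from the folklore crossing characterization in Lemma \ref{lemma:crossing_free_position}. The key structural hypothesis is that $u$ is not split, so $u$ occupies a single position in $\pi_b$ and is therefore totally comparable under $\prec$ to every other bottom vertex.

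For C.\ref{cond-split:between}, I would fix a top vertex $w$ with $v_1 \prec w \prec v_2$ and argue by contradiction, assuming $w$ has a neighbor $u' \in V_b \setminus \{u\}$. Since $u \neq u'$, exactly one of $u' \prec u$ or $u \prec u'$ holds in $\pi_b$. In the first case, I would apply Lemma \ref{lemma:crossing_free_position} to the edges $(v_1,u)$ and $(w,u')$, using $v_1 \prec w$ in $\pi_t$ and $u' \prec u$ in $\pi_b$, to obtain a crossing. In the second case, I would apply the lemma to $(w,u')$ and $(v_2,u)$, using $w \prec v_2$ and $u \prec u'$, and again obtain a crossing. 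Either case contradicts planarity, so $u'$ must equal $u$.

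For C.\ref{cond-split:first-last}, I would argue symmetrically in two parts. If some $u' \in N(v_1)$ satisfied $u \prec u'$ in $\pi_b$, then Lemma \ref{lemma:crossing_free_position} applied to $(v_1,u')$ and $(v_2,u)$ (with $v_1 \prec v_2$ and $u \prec u'$) would yield a crossing, so $u$ must be the last neighbor of $v_1$. Similarly, if some $u'' \in N(v_2)$ satisfied $u'' \prec u$, then applying the lemma to $(v_1,u)$ and $(v_2,u'')$ would give a crossing, so $u$ must be the first neighbor of $v_2$.

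I do not expect any significant obstacle: both conditions are essentially immediate consequences of Lemma \ref{lemma:crossing_free_position} once the hypothesis that $u$ is unsplit is used to ensure that the order of $u$ relative to any other bottom vertex is well-defined. The only place where care is needed is in C.\ref{cond-split:between}, where one must treat both possible positions of $u'$ relative to $u$ in $\pi_b$ and invoke the lemma with the appropriate pair of edges in each subcase.
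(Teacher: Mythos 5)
Your proposal is correct and follows essentially the same route as the paper: both conditions are derived by exhibiting, in each subcase, a pair of edges that cross by the characterization of Lemma~\ref{lemma:crossing_free_position}, contradicting planarity. The paper's proof is simply a terser version of your case analysis (it does not spell out the two positions of $u'$ relative to $u$ for C.\ref{cond-split:between}, but the underlying argument is identical).
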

\begin{proof}
If there is a top vertex $v'$ between $v_1$ and $v_2$ adjacent to a bottom vertex $u' \neq u$, then $(v',u')$ crosses $(v_1,u)$ or $(v_2,u)$.
If there is a neighbor $u''$ of $v_1$ after $u$ in $\pi_b$, then the edges $(v_1,u'')$ and $(v_2,u)$ cross. A symmetric argument holds when there is a neighbor of $v_2$ before $u$ in $\pi_b$.
\end{proof}

\section{Crossing Removal with \texorpdfstring{$\boldsymbol{k}$}{Lg} Splits}
\label{se:min-splits}

In this section, we prove that the \crs{k} problem is \NP-complete in general and linear-time solvable when the order $\pi_t$ of the top vertices is part of the input.

\begin{theorem}\label{th:crs-hard}
The \crs{k} problem is \NP-complete.
\end{theorem}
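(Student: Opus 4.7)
The plan is to first establish membership in \NP{} and then to reduce from Hamiltonian Path on connected graphs, which is well known to be \NP-complete. For membership, a certificate consists of at most $k$ splits together with the vertex orders $\pi_t,\pi_b$ of the resulting graph; by \cref{lemma:crossing_free_position}, planarity of the induced drawing can be checked in polynomial time.

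For hardness I would follow the scheme hinted at in \cref{fig:hamiltoniana,fig:hamiltonianb,fig:hamiltonianc}. Given a connected graph $G=(V,E)$ with $n$ vertices and $m$ edges, build the bipartite graph $G'=(V_t\cup V_b,E')$ by subdividing each edge: set $V_t=V$, add one bottom vertex $u_e$ for every edge $e=vw\in E$, and connect $u_e$ to both $v$ and $w$. Set the split budget to $k=m-(n-1)$. I then aim to show that $G$ has a Hamiltonian path if and only if $(G',k)$ is a yes-instance of \crs{k}.

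For the forward direction, given a Hamiltonian path $v_1,\dots,v_n$ of $G$, I would set $\pi_t=v_1,\dots,v_n$, leave the $n-1$ subdivision vertices of path edges unsplit, and split every remaining $u_e$ into two degree-$1$ copies, one per endpoint. Constructing $\pi_b$ by scanning $\pi_t$ from left to right and listing, for each $v_i$, first the split copies attached only to $v_i$ and then the unsplit subdivision vertex of $v_iv_{i+1}$, yields a drawing where every top vertex's neighborhood is a contiguous bottom interval and these intervals appear in $\pi_t$-order; by \cref{lemma:crossing_free_position} this is planar, and it uses exactly $m-(n-1)=k$ splits.

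For the backward direction, consider any planar 2-layer drawing obtained with at most $k$ splits and let $U\subseteq V_b$ be the set of unsplit original bottom vertices; since each split operation removes at most one original from the current vertex set, $|U|\ge m-k=n-1$. For each $u_e\in U$ with $e=vw$ I would invoke \cref{le:split-condition} (Condition~\ref{cond-split:between}): any top vertex $v'$ lying strictly between $v$ and $w$ in $\pi_t$ could only be adjacent to $u_e$; but $u_e$'s top neighbors are just $v$ and $w$, so $v'$ would have to be isolated in $G'$, contradicting the connectedness of $G$. Thus $v$ and $w$ are consecutive in $\pi_t$, so $\{e : u_e\in U\}$ is a set of pairs of consecutive vertices along $\pi_t$ and therefore a linear forest in $G$ with $|U|\ge n-1$ edges on $n$ vertices, which must be a Hamiltonian path. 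The main obstacle is precisely this step: combining the budget $k=m-(n-1)$ with the rigid constraint that each unsplit degree-$2$ subdivision vertex pins down its two top neighbors as consecutive elements of $\pi_t$ via \cref{le:split-condition}.
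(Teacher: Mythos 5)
Your proposal is correct and follows essentially the same route as the paper: membership via a split-plus-ordering certificate, and a reduction from Hamiltonian Path via edge subdivision with budget $k=|E|-|V|+1$, using C.\ref{cond-split:between} of \cref{le:split-condition} to force the endpoints of each unsplit subdivision vertex to be consecutive in $\pi_t$. Your explicit restriction to connected instances (to rule out a top vertex sitting between the two neighbors of an unsplit degree-2 subdivision vertex) is a small tidy refinement of a step the paper states more tersely, but the argument is otherwise identical.
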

\begin{proof}
The problem belongs to \NP\ since, given a set of at most $k$ splits for the vertices in $V_b$, we can check whether the resulting graph is  planar $2$-layer~\cite{emw-oacp-86}. %,hs-ncn-72}.

%For the \NP-hardness, we reduce 

We use a reduction from the \emph{Hamiltonian Path} problem to show the \NP-hardness. %; %see \cref{fig:hamiltonian}.
%{\color{blue}see the middle row of \cref{fig:CRSvsCRSV}.}
%{\color{blue} see \cref{fig:hamiltoniana}--\ref{fig:hamiltonianc}.}
%which asks whether a given graph $G = (V, E)$ contains a Hamiltonian path; see \cref{fig:hamiltonian}.
%
Given an instance $G=(V,E)$ of the Hamiltonian Path~problem, we denote by $G'$ the bipartite graph obtained by subdividing every edge of $G$ once \textcolor{blue}{(\cref{fig:hamiltoniana})}. We construct an instance of the \crs{k} problem \textcolor{blue}{(\cref{fig:hamiltonianb})} by setting the top vertex set $V_t$ to consist of the original vertices of $G$, the bottom vertex set $V_b$ to consist of the subdivision vertices of $G'$, and the split parameter to $k = |E| - |V| + 1$. The reduction can be easily performed in linear time. We prove the equivalence.
%that the constructed instance is positive if and only if $G$ has a Hamiltonian path.

Suppose that $G$ has a Hamiltonian path $v_1, \dots, v_n$.
%be the vertices of $G$ as they appear along this path. 
Set $\pi_t=v_1, \dots, v_n$, and split all the vertices of $V_b$, except for the subdivision vertex of the edge $(v_i,v_{i+1})$, for each $i=1,\dots,n-1$ {\textcolor{blue}(\cref{fig:hamiltonianc})}. This results in $|V_b| - (n - 1)$ splits, which is equal to $k$, since $|V_b|=|E|$ and $n = |V|$.
We then construct $\pi_b$ such that, for each $i=1,\dots,n-1$, all the neighbors of $v_i$ appear before all the neighbors of $v_{i+1}$, with their common neighbor being the last neighbor of $v_i$ and the first of $v_{i+1}$. This guarantees that both conditions of \cref{le:split-condition} are satisfied for every vertex of $V_b$. Together with \cref{lemma:crossing_free_position}, this guarantees that the  %constructed 
2-layer drawing is planar.

Suppose now that $G'$ admits a planar 2-layer drawing with at most $|E| - |V| + 1$ splits. Since $|E| = |V_b|$ and every vertex of $V_b$ has degree exactly $2$ 
%recall that $V_b$ is the set of 
(subdivision vertices), there exist at least $|V| - 1$ vertices in $V_b$ that are not split. Consider any such vertex $u \in V_b$. By  C.\ref{cond-split:between} of \cref{le:split-condition}, the two neighbors of $u$ are consecutive in $\pi_t$. Also, these vertices are connected in $G$ by the edge whose subdivision vertex is $u$. Since this holds for each of the at least $|V| - 1$ non-split vertices, we have that each of the $|V| - 1$ distinct pairs of consecutive vertices in $V_t$ (recall that $V_t = V$) is connected by an edge in $G$. Thus, $G$ has a Hamiltonian path.
\end{proof}

\begin{comment}
\begin{figure}[t]
    \centering
    \begin{subfigure}[c]{0.32\textwidth}
    \centering
    \includegraphics[page=1]{hamiltonian}
    \caption{ }
    \label{fig:hamiltonian1}
    \end{subfigure}\hfill
    \begin{subfigure}[c]{0.32\textwidth}
    \centering
    \includegraphics[page=2]{hamiltonian}
    \caption{ }
%    \captionAddToLinenumber{ }{-1}
    \label{fig:hamiltonian2}
    \end{subfigure}\hfill
    \begin{subfigure}[c]{0.32\textwidth}
    \centering
    \includegraphics[page=3]{hamiltonian}
    \caption{ }
%    \captionAddToLinenumber{ }{-1}
    \label{fig:hamiltonian3}
    \end{subfigure}
    \caption{\cref{th:crs-hard}. (a) Subdivided graph $G^\prime$. (b) Instance of \crs{k}. (c) Splits are minimized if and only if $G$ has a Hamiltonian path.}
    \label{fig:hamiltonian}
\end{figure}
\end{comment}

Next, we focus on the optimization version of the \crs{k} with the Fixed Order problem.
%We can assume that the input graph does not have any isolated vertex, since we can just remove those vertices and reinsert them in the final solution. 
Our recursive algorithm considers a constrained version of the problem, where the first neighbor in $\pi_b$ of the first vertex in $\pi_t$ may be prescribed. 
At the outset of the recursion, there exists no prescribed first neighbor. The algorithm returns the split vertices in $V_b$ and the corresponding order $\pi_b$.

In the base case, there is only one top vertex $v$, i.e., $|V_t|=1$. Since all vertices in $V_b$ have degree $1$, no split takes place. We set $\pi_b$ to be any order of the vertices in $V_b$ where the first vertex is the prescribed first neighbor of $v$, if~any.

In the recursive case when $|V_t|>1$, we label the vertices in $V_t$ as $v_1,\dots,v_{|V_t|}$, according to $\pi_t$. If the first neighbor of $v_1$ is prescribed, we denote it by $u_1^*$.
Also, we denote by $N^1$ the set of degree-$1$ neighbors of $v_1$, and by $N^+$ the other neighbors of $v_1$.
Note that only the vertices in $N^+$ are candidates to be split for~$v_1$. In particular, by  C.\ref{cond-split:between} of  \cref{le:split-condition}, a vertex in $N^+$ can avoid being split only if it is also incident to $v_2$. Further, since any vertex in $N^+$ that is not split must be the last neighbor of $v_1$ in $\pi_b$, by C.\ref{cond-split:first-last} of \cref{le:split-condition}, at most one of the common neighbors of $v_1$ and $v_2$ will not be split. Analogously, if $u_1^*$ is prescribed, then it must be split, unless $v_1$ has degree $1$.

In view of these properties, we distinguish three cases based on the common neighborhood of $v_1$ and $v_2$. In all cases, we will recursively compute a solution for the instance composed of the graph $G'=(V_t' \cup V_b',E')$ obtained by removing $v_1$ and the vertices in $N^1$ from $G$, and of the order $\pi_t' = v_2, \dots, v_{|V_t|}$. We denote by $\pi_b'$ and $s'$ the computed order and the number of splits for the vertices in $V_b'$.
In the following we specify for each case whether the first neighbor of $v_2$ in the new instance is prescribed or not, and how to incorporate the neighbors of $v_1$ into $\pi_b'$.

\textbf{Case 1: $v_1$ and $v_2$ have no common neighbor}; %see Figure~\ref{fig:CRSOpta}. 
%{\color{blue}see the bottom row of \cref{fig:CRSvsCRSV}.}
see \cref{fig:CRSOpta}.
In this case, we do not prescribe the first neighbor of $v_2$ in the instance composed of $G'$ and $\pi_t'$.
To compute a solution for the original instance, we split each vertex in $N^+$ so that one copy becomes incident only to $v_1$. 
We construct $\pi_b$ by selecting the prescribed vertex $u_1^*$, if any, followed by the remaining neighbors of $v_1$ in any order and, finally, by appending $\pi_b'$. This results in $s = |N^+| + s'$ splits.

\begin{comment}
\begin{figure}[t]
    \centering
    \begin{subfigure}[c]{0.29\textwidth}
    \centering
    \includegraphics[page=2, width=\textwidth]{ipe/algorithm.pdf}
    \caption{ }
    \label{fig:CRSOpta}
    \end{subfigure}\hfill
    \begin{subfigure}[c]{0.29\textwidth}
    \centering
    \includegraphics[page=3, width=\textwidth]{ipe/algorithm.pdf}
    \caption{ }
%    \captionAddToLinenumber{ }{-1}
    \label{fig:CRSOptb}
    \end{subfigure}\hfill
    \begin{subfigure}[c]{0.29\textwidth}
    \centering
    \includegraphics[page=4, width=\textwidth]{ipe/algorithm.pdf}
    \caption{ }
    \label{fig:CRSOptc}
    \end{subfigure}
    \caption{Algorithm for
    \crs{k} with Fixed Order optimization. Vertices in $N^+$ are colored in shades of gray. (a) 
    Case 1, (b) Case 2, and (c) Case 3.}
    \label{fig:CRSOpt}
\end{figure}
\end{comment}

\textbf{Case 2: $v_1$ and $v_2$ have exactly one common neighbor $u$.}
If $u = u_1^*$ and $v_1$ have a degree larger than $1$, then $u$ cannot be the last neighbor of $v_1$ and must be split. Thus, we perform the same procedure as in Case 1. 
Otherwise, in the instance composed of $G'$ and $\pi_t'$, we set $u$ as the prescribed first neighbor of $v_2$; %see Figure~\ref{fig:CRSOptb}.
%{\color{blue}see the bottom row of \cref{fig:CRSvsCRSV}.}
{blue}see \cref{fig:CRSOptb}.
To compute a solution for the original instance, we split each vertex in $N^+$, except $u$, so that one copy becomes incident only to $v_1$. 
We construct $\pi_b$ by selecting the prescribed vertex $u_1^*$, if any, followed by the remaining neighbors of $v_1$ different from $u$ in any order and, finally, by appending $\pi_b'$. 
%Recall that $u$ is the first vertex in $\pi_b'$. 
This results in $s = |N^+| -1 + s'$ splits.

\textbf{Case 3: $v_1$ and $v_2$ have more than one common neighbor.}
If $v_1$ and $v_2$ have exactly two common neighbors $u,u'$ and one of them is $u_1^*$, say $u = u_1^*$, then $u$ cannot be the last neighbor of $v_1$, as $v_1$ has degree larger than $1$. Thus, we proceed exactly as in Case 2, using $u'$ as the only common neighbor of $v_1$ and $v_2$.
%where the vertex that is prescribed as the first neighbor of $v_2$ is the common neighbor of $v_1$ and $v_2$ different from $u_1^*$. 

Otherwise, there are at least two neighbors of $v_1$ different from $u_1^*$; %see Figure~\ref{fig:CRSOptc}. 
%{\color{blue}see the bottom row of \cref{fig:CRSvsCRSV}.}
see \cref{fig:CRSOptc}.
%In this case, 
We want to choose one of these vertices as the last neighbor of $v_1$, so that it is not split. However, the choice is not arbitrary 
%we cannot perform an arbitrary choice at this stage, 
as this may affect the possibility for $v_2$ to save the split for a neighbor it shares with $v_3$. 
In the instance composed of $G'$ and $\pi_t'$, we do not prescribe the first vertex of $v_2$.
To compute a solution for the original instance, we simply choose as the last neighbor of $v_1$ any of its common neighbors with $v_2$ that has not been set as the last neighbor of $v_2$ in $\pi_b'$. 
Such a vertex, say $u$, always exists since $v_1$ and $v_2$ have at least two common neighbors different from $u_1^*$, and can be moved to become the first vertex in $\pi_b'$.
%
%More precisely, 
Specifically, we split all the vertices in $N^+$, except for $u$, so that one copy becomes incident only to $v_1$. 
We construct $\pi_b$ by selecting the prescribed vertex $u_1^*$, if any, followed by the remaining neighbors of $v_1$ different from $u$ in any order. 
We then modify $\pi_b'$ by moving $u$ to be the first vertex. Note that this operation does not affect planarity, as it only involves reordering the set of consecutive degree-$1$ vertices incident to $v_2$.
Finally, we append the modified $\pi_b'$. This results in $s = |N^+| - 1 + s'$ splits.

\begin{theorem}\label{th:crs-fixed-algo}
For a bipartite graph $G=(V_t \cup V_b,E)$ and an order $\pi_t$ of $V_t$, the optimization version of \crs{k} with Fixed Order can be solved in $O(|E|)$ time.
\end{theorem}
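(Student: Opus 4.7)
The plan is to establish correctness of the recursive algorithm by induction on $|V_t|$ and then bound its running time. The base case $|V_t|=1$ is immediate: all vertices of $V_b$ have degree $1$, no split is needed, and any order of $V_b$ that begins with the prescribed first neighbor (if any) yields a planar 2-layer drawing by Lemma~\ref{lemma:crossing_free_position}. The key structural fact for the inductive step follows from Lemma~\ref{le:split-condition}: a neighbor $u$ of $v_1$ in $N^+$ can avoid being split only if $u$ is also adjacent to $v_2$, $u$ is the last neighbor of $v_1$ in $\pi_b$, and $u$ is the first neighbor of $v_2$ in $\pi_b$. Hence at most one vertex of $N^+$ can be saved at this recursive step, and exactly $|N^+|$ splits are incurred unless a common neighbor of $v_1$ and $v_2$ is kept unsplit.

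This observation immediately settles Case~1: if $v_1$ and $v_2$ have no common neighbor, every vertex in $N^+$ must be split, so $|N^+|+s'$ is optimal by induction. For Case~2, saving a split requires keeping the unique common neighbor $u$ as the last neighbor of $v_1$ and the first neighbor of $v_2$; if $u=u_1^*$ and $\deg(v_1)>1$, then $u$ cannot simultaneously be first and last, so no saving is possible, matching the algorithm. Otherwise, prescribing $u$ as the first neighbor of $v_2$ in the recursive instance exactly encodes the constraint imposed by saving $u$; since any globally optimal solution saving $u$ induces a feasible recursive solution with this prescription (and vice versa), induction gives optimality $|N^+|-1+s'$.

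The main obstacle is Case~3, where the local choice of which common neighbor of $v_1$ and $v_2$ to keep unsplit could in principle conflict with the recursive subproblem. I would argue as follows. At most one split can be saved at this step, so any solution uses at least $|N^+|-1$ splits here; combined with the inductive optimality of $s'$ on $G'$, the algorithm's $|N^+|-1+s'$ splits are optimal \emph{provided} the chosen common neighbor $u$ does not prevent the recursive step from saving a split at $v_2$. Since $v_1$ and $v_2$ have at least two common neighbors different from $u_1^*$, there exists such a $u$ that is not the last neighbor of $v_2$ in $\pi_b'$; moving $u$ to the front of $\pi_b'$ only permutes consecutive vertices of $V_b'$ that are adjacent solely to $v_2$ (as all other neighbors of $v_1$ have been split off), which by Lemma~\ref{lemma:crossing_free_position} preserves planarity. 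Moreover, an exchange argument shows that swapping the role of $u$ with any alternative common neighbor that an optimal solution might have chosen never increases the split count, so the algorithm's solution matches the optimum.

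For the running time, I would store $G$ using adjacency lists for each $v_i\in V_t$ with cross-pointers into doubly linked lists representing $\pi_b$, together with a constant-time lookup per bottom vertex that detects whether it is a common neighbor of $v_i$ and $v_{i+1}$. At recursive step $i$, classifying the neighbors of $v_i$ into $N^1$ and $N^+$, identifying common neighbors with $v_{i+1}$, performing the required splits, and moving a vertex to the front of $\pi_b'$ all take $O(\deg(v_i))$ time using the doubly linked list structure. Summing over all top vertices yields $O(\sum_{i}\deg(v_i))=O(|E|)$, establishing the claimed linear bound.
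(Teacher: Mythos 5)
Your proposal follows the same route as the paper: the same recursive case analysis on the common neighborhood of $v_1$ and $v_2$, optimality via Lemma~\ref{le:split-condition} (at most one vertex of $N^+$ can escape splitting, and only a common neighbor of $v_1$ and $v_2$ placed at the boundary of their neighbor blocks), planarity via Lemma~\ref{lemma:crossing_free_position}, and the same handling of Case~3 by picking a common neighbor not already used as the last neighbor of $v_2$ and moving it to the front of $\pi_b'$ among $v_2$'s degree-$1$ neighbors. Your inductive framing and the explicit data-structure argument for the $O(|E|)$ bound are, if anything, more detailed than the paper's own proof, which defers the optimality argument to the discussion preceding the case distinction.
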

\begin{proof}
By construction, for each $i = 1, \dots, |V_t|-1$, all neighbors of $v_i$ precede all neighbors of $v_{i+1}$ in $\pi_b$. Thus, by \cref{lemma:crossing_free_position}, the drawing is planar.
The minimality of the number of splits follows from  \cref{le:split-condition}, as discussed before the case distinction. In particular, any minimum-splits solution can be shown to be equivalent to the one produced by our algorithm.
The time complexity follows as each vertex only needs to check its neighbors a constant number of times.
\end{proof}

\begin{algorithm*}
    \caption{\crs{k}-with-Fixed-Order($V_t, V_b, \pi_t$)}
    \begin{algorithmic}
    \If{there is only one vertex in $V_t$}
        \State Let $N(v)$ be the neighbor vertices of $v$
        \State Output an arbitrary order $N(v)$
    \Else
        \State Let $v_1$ and $v_2$ are the first two vertices in $V_t$
        \If{$N(v_1) \cap N(v_2) \neq \emptyset$}
            \State Let $N^+(v_1)$ be the set of neighbors of $v_1$ having degree more than one
            \State let $\pi_b'$ be the output of \crs{k}-with-Fixed-Order($V_t-v_1, V_b, \pi_t-v_1$)
            \State Find a common neighbor $u$ of $v_1$ and $v_2$ not in $\pi_b'$
            \State output $u, \pi_b'$
        \Else
            \State let $\pi_b'$ be the output of \crs{k}-with-Fixed-Order($V_t-v_1, V_b, \pi_t-v_1$)
            \State Output $N(v_1), \pi_b'$
        \EndIf 
    \EndIf
    \end{algorithmic}
\end{algorithm*}

We conclude this section by mentioning that the \crs{k} problem had already been considered, under a different terminology, in the context of molecular QCA circuits design~\cite{DBLP:journals/tcad/ChaudharyCHNRW07}. Here, the problem was claimed to be \NP-complete, without providing a formal proof. In the same work, when the order $\pi_t$ of the top vertices is part of the input, an alternative algorithm was proposed based on the construction of an auxiliary graph that has super-linear size. Exploiting linear-time sorting algorithms and observations that allow avoiding explicitly constructing all edges of this graph, the authors were able to obtain a linear-time implementation.
We believe that our algorithm of \cref{th:crs-fixed-algo} is simpler and more intuitive, and directly leads to a linear-time implementation.

\section{Crossing Removal with \texorpdfstring{$\boldsymbol{k}$}{Lg} Split Vertices}

In this section, we prove that the \crsv{k} problem is \NP-complete in general and linear-time solvable when the order $\pi_t$ of the top vertices is part of the input.
Ahmed et al.~\cite{AhmedKK22} showed that \crsv{k} is \FPT\  when parameterized by $k$.
To prove the \NP-completeness we can use the reduction of  \cref{th:crs-hard}. In fact, in the graphs produced by that reduction all vertices in $V_b$ have degree $2$. Hence, the number of vertices that are split coincides with the total number of splits.
% The \NP-completeness immediately follows with the same reduction as in  \cref{th:crs-hard}. In fact, in the instances produced by that reduction all vertices in $V_b$ have degree $2$. Therefore, for these instances, the number of vertices that are split coincides with the total number of splits.

\begin{theorem}\label{th:crsv-hard}
The \crsv{k} problem is \NP-complete.
\end{theorem}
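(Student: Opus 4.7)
The plan is to reuse verbatim the reduction from Hamiltonian Path that was constructed in the proof of \cref{th:crs-hard}. Recall that given an instance $G=(V,E)$ of Hamiltonian Path, we built the bipartite graph $G'$ by subdividing every edge of $G$, placing the original vertices on the top layer and the subdivision vertices on the bottom layer, and we set the split parameter $k = |E| - |V| + 1$. I would re-use exactly this construction as an instance of \crsv{k}.

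Membership in \NP\ follows the same template as in \cref{th:crs-hard}: a certificate consists of the (at most $k$) original vertices of $V_b$ chosen to be split, together with the assignment of edges among their copies and the final orders $\pi_t, \pi_b$. Planarity of the resulting 2-layer drawing is then checked in linear time via \cite{emw-oacp-86}.

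For hardness, the key observation, already highlighted in the paper, is that every vertex of $V_b$ in $G'$ has degree exactly $2$, since it is a subdivision vertex. The main work is to argue that for such instances the number of splits coincides with the number of split vertices. A single split of a degree-$2$ vertex produces two copies of degrees summing to $2$; if one copy receives both edges the split is useless and can be undone, so we may assume each copy has degree exactly $1$. Any further split of a degree-$1$ copy only introduces an isolated vertex, which can be discarded without affecting planarity. Hence in any optimal \crsv{k} solution on $G'$ we may assume each split vertex is split exactly once, so the total number of splits equals the number of split vertices.

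With this equivalence in hand, the argument of \cref{th:crs-hard} carries over unchanged: a Hamiltonian path $v_1,\dots,v_n$ in $G$ yields a planar drawing in which exactly the $n-1$ subdivision vertices corresponding to the path edges are not split, giving $|V_b|-(n-1) = |E|-|V|+1 = k$ split vertices; conversely, any solution with at most $k$ split vertices leaves at least $|V|-1$ subdivision vertices unsplit, and by C.\ref{cond-split:between} of \cref{le:split-condition} each such unsplit vertex witnesses an edge between two consecutive top vertices in $\pi_t$, producing a Hamiltonian path in $G$. The only potential obstacle is the rigorous treatment of ``multiple splits of the same original vertex,'' but the degree-$2$ structure of $V_b$ makes this collapse into a single meaningful split per original vertex, so no new combinatorial argument is required.
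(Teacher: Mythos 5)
Your proposal is correct and follows essentially the same route as the paper: it reuses the Hamiltonian Path reduction from \cref{th:crs-hard} and observes that, since every bottom vertex in the constructed instance has degree $2$, the number of splits coincides with the number of split vertices. Your extra remark that further splits of a degree-$1$ copy are useless merely elaborates the same one-line observation the paper relies on.
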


For the version with Fixed Order, we first use C.\ref{cond-split:between} of  \cref{le:split-condition} to identify vertices that need to be split at least once, and repeatedly split them until each has degree $1$.
For a vertex $u \in V_b$, we can decide if it needs to be split by checking whether its neighbors are consecutive in $\pi_t$ and, if $u$ has degree at least $3$, all its neighbors different from the first and last have degree exactly $1$. 

%After performing 
We first perform all necessary splits. 
%, we proceed as follows. 
For each $i = 1, \dots, |V_t|-1$, consider the two consecutive top vertices $v_i$ and $v_{i+1}$. 
If they have no common neighbor, no split is needed. 
%then we do not perform any split. 
%
If they have exactly one common neighbor $u$, then we set $u$ as the last neighbor of $v_i$ and the first of $v_{i+1}$, which allows us not to split $u$, according to C.\ref{cond-split:first-last}.
%of  \cref{le:split-condition}.
Since $u$ did not participate in any necessary split, if $u$ is also adjacent to other vertices, then all its neighbors have degree $1$, except possibly the first and last. Hence, C.\ref{cond-split:first-last} 
%of  \cref{le:split-condition} 
can be guaranteed for all pairs of consecutive neighbors of $u$.

Otherwise, $v_i$ and $v_{i+1}$ have at least two common neighbors and thus have degree at least $2$. 
%of  \cref{le:split-condition}.
Hence, all common neighbors of $v_i$ and $v_{i+1}$ must be split, except for at most one, namely the one that is set as the last neighbor of $v_i$ and as the first of $v_{i+1}$. Since all these vertices are incident only to $v_i$ and $v_{i+1}$, as otherwise they would have been split by C.\ref{cond-split:between}, we can arbitrarily choose any of them, without affecting the splits of other vertices. 

\begin{algorithm*}
    \caption{\crsv{k}-with-Fixed-Order($V_t, V_b, \pi_t$)}
    \begin{algorithmic}
        \For{$i = 1, \cdots, |V_t|-1$}
            \If{$v_i$ and $v_{i+1}$ has no common neighbors}
                \State No split is needed
            \EndIf
            \If{$v_i$ and $v_{i+1}$ has exactly one common neighbor}
                \State Let $u$ be the common neighbor
                \If{$u$ is also adjacent to other vertices that has degree 1 except the first an last one}
                    \State No split is needed
                \Else
                    \State Split $u$ when mandatory
                \EndIf
            \Else
                \State Split all common neighbors except at most once
            \EndIf
        \EndFor
    \end{algorithmic}
\end{algorithm*}

At the end 
%of this procedure, 
we construct the order $\pi_b$  
%of the vertices in $V_b$ 
so that, for each $i = 1, \dots, |V_t|-1$, all the neighbors of $v_i$ precede all the neighbors of $v_{i+1}$, and the unique common neighbor of $v_i$ and $v_{i+1}$, if any, is the last neighbor of $v_i$ and the first of $v_{i+1}$. By \cref{lemma:crossing_free_position}, this guarantees planarity. Identifying and performing all unavoidable splits and computing $\pi_b$ can be easily done in $O(|E|)$ time. Since we only performed unavoidable splits, as dictated by \cref{le:split-condition}, we have the following.

\begin{theorem}\label{th:crsv-fixed-algo}
For a bipartite graph $G=(V_t \cup V_b,E)$ and an order $\pi_t$ of $V_t$, the optimization version of \crsv{k} with Fixed Order minimizing the number of split vertices can be solved in $O(|E|)$ time.
\end{theorem}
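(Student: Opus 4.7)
The plan is to establish three properties of the procedure described above: planarity of the output drawing, minimality of the number of split original vertices, and an $O(|E|)$ running time. Planarity follows directly from the construction of $\pi_b$ by invoking \cref{lemma:crossing_free_position}: by design, for each consecutive pair $v_i \prec v_{i+1}$ in $\pi_t$, every neighbor of $v_i$ precedes every neighbor of $v_{i+1}$ in $\pi_b$, with at most one common neighbor placed exactly at the interface. Hence no pair of edges can realize the crossing pattern of \cref{lemma:crossing_free_position}.

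For optimality, the approach is to show that every original vertex split by the algorithm must be split at least once in any planar 2-layer drawing of $G$ with the given order $\pi_t$. I would partition the split original vertices into two groups. The first group consists of the vertices detected in the initial scan, namely those whose top-neighborhood is not a consecutive block in $\pi_t$, or those of degree at least $3$ having an internal top neighbor of degree greater than $1$. For any such vertex $u$, condition C.\ref{cond-split:between} of \cref{le:split-condition} exhibits a top vertex between two neighbors of $u$ that witnesses a forced crossing, so $u$ must be split at least once. The second group consists of the additional vertices split in the common-neighbor step: for a pair $(v_i,v_{i+1})$ with $c \ge 2$ remaining (non-first-group) common neighbors, C.\ref{cond-split:first-last} allows at most one of them to sit between the neighborhoods of $v_i$ and $v_{i+1}$ in $\pi_b$, so at least $c-1$ of them must be split. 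Summing the two contributions reproduces exactly the count returned by the algorithm.

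The delicate point to make rigorous is that the greedy local choice of which common neighbor to leave unsplit is safe across pairs. The key claim to verify is that every vertex $u$ in the second group has top-neighborhood exactly $\{v_i,v_{i+1}\}$: since $u$ has degree at least $2$ in $G$ but was not placed in the first group, any third top neighbor of $u$ would either break the consecutivity of its top-neighborhood in $\pi_t$ or violate the interior-degree condition, and would thus have moved $u$ into the first group. Consequently, no second-group vertex is simultaneously a common neighbor of two distinct pairs, so the per-pair greedy choices cannot interact. For the running time, each edge is inspected a constant number of times during the consecutivity and degree checks, the common-neighbor enumeration for each consecutive pair, and the construction of $\pi_b$, giving the claimed $O(|E|)$ bound. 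The main obstacle is precisely the independence argument above; once it is in place, the matching lower and upper bounds on the number of split vertices follow immediately.
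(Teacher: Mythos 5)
Your proposal is correct and follows essentially the same route as the paper: planarity via \cref{lemma:crossing_free_position}, unavoidable splits identified through C.\ref{cond-split:between} and C.\ref{cond-split:first-last} of \cref{le:split-condition}, and the observation that second-group common neighbors are adjacent only to $v_i$ and $v_{i+1}$ (so the per-pair choices are independent). Your write-up merely makes the independence argument slightly more explicit than the paper does.
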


\section{Crossing Minimization with \texorpdfstring{$\boldsymbol{k}$}{Lg} Splits}

In this section we consider minimizing crossings (not necessarily removing all), by applying at most $k$ splits. We first prove \NP-completeness of the decision problem \cms{k,M} with Fixed Order and then give a polynomial-time algorithm assuming the integer $k$ is a constant.
%\textbf{Crossing Minimization with $\boldsymbol{k}$ Splits  -- \cms{\boldsymbol k}}: Decide if there is a 2-layer drawing of $G$ with at most $M$ crossings after applying at most $k$ vertex-splits to the vertices~in~$V_b$.

\begin{theorem}\label{th:crms-fixed-hard}
For a bipartite graph $G=(V_t \cup V_b,E)$, an order $\pi_t$ of $V_t$, and integers $k,M$, problem \cms{k,M} with Fixed Order is \NP-complete.
\end{theorem}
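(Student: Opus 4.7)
The plan is to first establish membership in \NP\ and then to reduce from the classical One-Sided Crossing Minimization (OSCM) problem.

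Membership in \NP\ is routine: a certificate consists of at most $k$ vertex-splits applied to vertices in $V_b$, together with a permutation $\pi_b$ of the resulting bottom vertices. Given such a certificate, the number of crossings can be counted in polynomial time by enumerating pairs of edges and applying the characterization of \cref{lemma:crossing_free_position}; we then verify that at most $M$ crossings are produced.

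For \NP-hardness, I would reduce from OSCM, which Eades and Wormald~\cite{EadesW94} proved to be \NP-complete. Recall that OSCM takes as input a bipartite graph $H=(A\cup B,F)$ with a fixed linear order on $A$ and an integer $M$, and asks whether there exists a linear order of $B$ such that the resulting 2-layer drawing has at most $M$ crossings. Given such an instance, I would build an instance of \cms{k,M} with Fixed Order by setting $G:=H$, taking $V_t:=A$ with order $\pi_t$ equal to the given order, setting $V_b:=B$, keeping $M$ as is, and crucially setting $k:=0$. The construction is clearly polynomial (indeed, linear) in the size of the OSCM instance.

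It remains to verify the equivalence. Since $k=0$ disallows any split, the set of feasible solutions for \cms{0,M} with Fixed Order is exactly the set of orderings of $V_b$ in the original graph, and the number of crossings of such an ordering, as characterized by \cref{lemma:crossing_free_position}, coincides with the number of crossings counted in OSCM. Thus the two instances are equivalent yes/no instances, completing the reduction. I do not expect any substantial obstacle here: the hardness is in fact inherited directly from OSCM by specializing the split budget to zero, and the only thing to be careful about is that the problem formulation of \cms{k,M} genuinely subsumes the classical OSCM decision problem when $k=0$, which it does by construction.
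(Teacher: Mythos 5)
Your proof is correct for the theorem as literally stated, but it takes a genuinely different and much more degenerate route than the paper. You specialize the split budget to $k=0$, so that \cms{0,M} with Fixed Order collapses syntactically to the one-sided crossing minimization problem of Eades and Wormald; since NP-hardness of a subfamily of instances implies NP-hardness of the general problem, this is formally valid given that the ``Fixed Order'' variant fixes only $\pi_t$ and leaves $\pi_b$ free. The paper instead reduces from the same source problem but sets $k=|V_b|$ and adds two gadgets (a matching of $M+1$ edges between new vertex sets $U_t,U_b$, and a set $W_t$ of pendant neighbors, one per vertex of $V_b$) whose purpose is to force every bottom vertex to be split exactly once, with one copy absorbing the $W_t$-neighbor and the other retaining the original neighborhood; the freedom to place the copies then re-creates the freedom to permute $B$ in the source instance. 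What the paper's construction buys is twofold. First, it shows that hardness is not merely inherited from the split-free case: the problem stays hard even when the budget is as large as $|V_b|$, which is the interesting regime (with unbounded splits one can always reach planarity, so one might hope a generous budget makes the problem easy). Second, and more importantly, the paper's reduction remains valid for the stricter variant in which an initial order $\pi_b$ of $V_b$ is also part of the input and only the split copies may be repositioned --- note that the reduction explicitly assigns an arbitrary $\pi_b$ to $V_b'$, and that the companion \XP\ algorithm of \cref{th:crms-fixed-algo} takes both $\pi_t$ and $\pi_b$ as input. Under that reading of the problem, your $k=0$ instance has no remaining degrees of freedom, is solvable by simply counting crossings, and the reduction fails entirely. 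So your argument is sound against the problem definition given in the preliminaries, but it is brittle: it proves a strictly weaker fact than the paper's construction and would not survive the variant the rest of the section is actually calibrated against. It would be worth at least remarking that your reduction says nothing about instances in which splitting is genuinely available.
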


\begin{proof}
We reduce from the \NP-complete \textsc{Decision Crossing Problem} (\textsc{DCP})~\cite{EadesW94}, where given a bipartite 2-layer graph with one vertex order fixed, the goal is to find an order of the other set such that the number of crossings is at most a given integer $M$. %one sided crossing minimization
Given an instance of \textsc{DCP}, i.e.,a 2-layer graph $G=(V_t\cup V_b,E)$, with ordering $\pi_t$ of $V_t$ and integer $M$, we construct an instance $G'$ of \cms{k,M} where $k=|V_b|$. First let $G'=(V'_t \cup V'_b,E')$ be a copy of $G$. We give an arbitrary ordering $\pi_b$ to the vertices of $V'_b$. 
We then  add, respectively,  to each vertex set $V_t$ and $V_b$ a set $U_t$ and $U_b$ of $M+1$ vertices and connect each $u \in U_t$ to exactly one $v \in U_b$, forming a matching of size $M+1$. 
%such that for each $v\in M_1$ ($M_2$), $v$ has exactly one neighbor $u$, and $u\in M_2$ ($M_1$). 
We add the vertices of $U_t$ to $\pi_t$ (resp. $U_b$ to $\pi_b$) after all the vertices of $V_t$ ($V_b)$. 
We lastly add a set $W_t$ of $k$ vertices to $V'_t$, placed at the end of $\pi_t$, such that each $w_i \in W_t$ ($i=1, \dots, k$) has exactly one neighbor $v_i\in V_b$ and vice versa; %(see \cref{fig:red_splita}). 
%{\color{blue}see the top row of \cref{fig:red_split}.}
see \cref{fig:red_splita}.

Given an ordering $\pi_b^*$ of $V_b$ that results in a drawing of $G$ with at most $M$ crossings, we show that we can solve the \cms{k,M} instance $G'$. In $G'$, we split each vertex of $V_b$ to obtain the sets $V_b^1$ and $V_b^2$ in which we place exactly one copy of each original vertex. 
We place $V_b^2$ after the vertices of $U_b$ in $\pi_b$ in the same order that the vertices of $W_t$ appear in $\pi_t$ and draw a single edge between the copies and their neighbor in $W_t$. 
We place $V_b^1$ before the vertices of $M_2$ in $\pi_b$ in the same order as in $\pi_b^*$.
The graph induced by $V_b^1$ and $V_t$ is the same graph as $G$, hence it has at most $M$ crossings. Since $V_t$ only has neighbors in $V_b$ and all those neighbors are in $V_b^1$, it has no other outgoing edges, similarly, all edges incident to vertices in $W_t$ are assigned to the copies in $V_b^2$. The remaining graph is crossing-free as the vertices in $U_t$ and $W_t$ form a crossing-free matching with the vertices in $U_b$ and $V_b^2$. 

Conversely, let $G^*$ be a 2-layer drawing obtained from $G'$ after $k$ split operations that has at most $M$ crossings. 
Since each vertex $v\in V_b$ has a neighbor $w\in W_t$, it induces $M+1$ crossings with edges induced by the vertices in $U_t \cup U_b$. 
Since the vertices in $U_b$ have a single neighbor, they cannot be split, thus every vertex in $V_b$ is split once, and their neighborhood are partitioned for each copy in the following way: one copy receives the neighbor in $W_t$ and one copy receives the remaining neighbors, which are in $V_t$ %(see \cref{fig:red_splitb})
%{\color{blue}(see the top row of \cref{fig:red_split})}, 
(\cref{fig:red_splitb}), thus avoiding the at least $M+1$ crossings induced by $U_t \cup U_b$. Any other split would imply at least $M+1$ crossings.
The graph induced by the copies that receive the neighbors in $V_t$ has at most $M$ crossings, thus, the ordering found for those copies is a solution to the \textsc{DCP} instance $G$.
\end{proof}

\begin{figure*}
    \centering
    \begin{subfigure}[t]{0.47\textwidth}
    \centering
    \includegraphics[page=1, width=.9\textwidth]{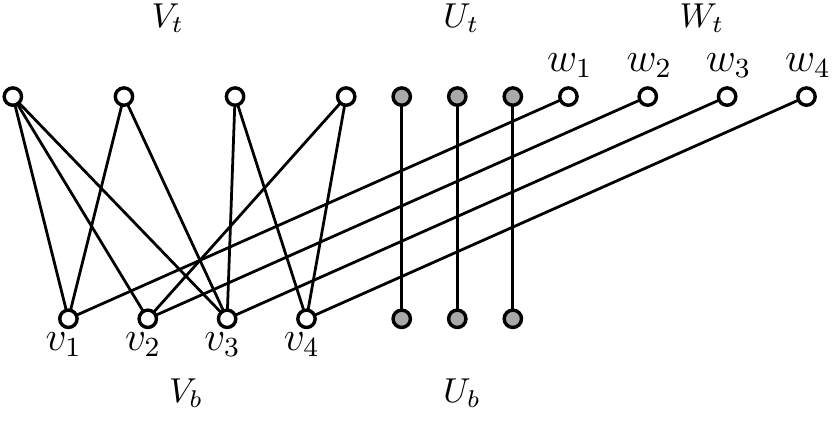}
    \caption{Instance of \cms{k} constructed from a \textsc{DCP} instance, in light gray the vertices in $U_t \cup U_b$;  before splitting.}
    \label{fig:red_splita}
    \end{subfigure}\hfill
    \begin{subfigure}[t]{0.47\textwidth}
    \centering
    \includegraphics[page=2, width=.9\textwidth]{ipe/min_red.pdf}
    \caption{After splitting.}
    \label{fig:red_splitb}
    \end{subfigure}
%    \begin{subfigure}[c]{0.48\textwidth}
%    \includegraphics[width=\textwidth]{ipe/fig1_4.pdf}
%    \caption{ }
%    \end{subfigure}
    \begin{subfigure}[c]{0.99\textwidth}
    \centering
    \includegraphics{fig_heuristics.pdf}
    \caption{Vertex $v_i$ with span 6 and a split in $V^l$ (span 1) and $V^r$ (span 3). The barycenter of $V^r$ is $p^r$. When moving right from $p_i$ in \textit{CR-count} we process $v_{i+1}$ (reduces $|V^r|$ crossings) and $v_{i+2}$ (reduces $|V^r|$ but adds 1 crossing), but not $v_{i+3}$.}
    \label{fig:heuristics_expl}
    \end{subfigure}
    
    \caption{Illustrations for \cref{th:crms-fixed-hard} (a)--(b); illustration for the crossing reduction heuristics (c).
    %Illustrating different aspects of crossing reduction problems.
    }
    %\label{fig:CRSvsCRSV}
\end{figure*}

\begin{comment}
\begin{figure}[t]
    \centering
    \begin{subfigure}[c]{0.45\textwidth}
    \centering
    \includegraphics[page=1, width=\textwidth]{ipe/min_red.pdf}
    \caption{ }
    \label{fig:red_splita}
    \end{subfigure}
    \begin{subfigure}[c]{0.45\textwidth}
    \centering
    \includegraphics[page=2, width=\textwidth]{ipe/min_red.pdf}
    \caption{ }
    \label{fig:red_splitb}
    \end{subfigure}
    \caption{Instance of \cms{k} constructed from a \textsc{DCP} instance, in light gray the vertices in $U_t \cup U_b$ (a)  before the splitting operation, (b) after splitting}
    \label{fig:red_split}
\end{figure}
\end{comment}

%Given a bipartite graph $G=(V_t \cup V_b,E)$, an order $\pi_t$ and $\pi_b$ of $V_t$ and $V_b$ respectively, an integer $k$, the minimization version of \cms{k} aims to find the embedding of the graph with the lowest crossing number after applying $k$ split operations on $G$. When parametrized by the number $k$ of split operations and the maximum vertex degree $\delta$, we can solve it using the following XP algorithm.

Next, we present a simple \XP-time algorithm for the crossing minimization version of \cms, parameterized by the number $k$ of splits, i.e., the algorithm runs in polynomial time $O(n^{f(k)})$, where $n$ is the input size, $k$ is the parameter, and $f$ is a computable function. 
Let $G=(V_t \cup V_b, E)$ be a 2-layer graph with vertex orders $\pi_t$ and $\pi_b$ and let $k$ be the desired number of splits. 
Our algorithm executes the following steps. 
First, it determines a set of splits by choosing $k$ times a vertex from the $n$ vertices in $V_b$ -- we enumerate all options. 
For any vertex $v\in V_b$ split $i$ times in the first step, $v$ is replaced by the set of copies $\{v_1,...,v_{i+1}\}$. 
The neighborhood $N(v)$ of a vertex $v \in V_b$ is a subset of $V_t$ ordered by $\pi_t$. 
We partition this ordered neighborhood into $i+1$ consecutive subsets, i.e., for each subset, all its elements are sequential in $N(v)$ -- again, we enumerate all possible partitions. 
Each set is assigned to be the neighborhood of one of the copies of $v$.
The algorithm then chooses an ordering of all copies of all split vertices and attempts all their possible placements by merging them into the order $\pi_b$ of the unsplit vertices of $V_b$.
%meaning all possible unions of $\pi_b$ and the ordered set of the $2k$ copies. 
The crossing number of every resulting layout is computed and the graph with minimum crossing number yields the solution to our input. It remains to show that the running time of this algorithm is polynomial for constant $k$.

\begin{theorem}\label{th:crms-fixed-algo}
For a 2-layer graph $G=(V_t \cup V_b, E)$ with vertex orders $\pi_t, \pi_b$ and a constant $k \in \mathbb{N}$ we can minimize the number of crossings by applying at most $k$ splits in time $O(n^{4k})$.
%The algorithm correctly finds the crossing minimal drawing obtainable after $k$ split operations in $\mathcal{O}(n^{f(k)f(\delta)})$ time.
\end{theorem}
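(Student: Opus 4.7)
The statement is purely a running-time bound on the exhaustive procedure sketched before it, so my plan is to count the number of configurations enumerated at each stage of the algorithm, show that each configuration can be evaluated in polynomial time, and multiply.

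First I would bound the enumeration of \emph{which} splits are performed. Since each of the $k$ splits picks one vertex from $V_b$, the number of multisets of split operations is at most $|V_b|^k = O(n^k)$. This accounts for all tuples $(i_1,\dots,i_s)$ of per-vertex split counts with $\sum_j i_j = k$. Second, I would bound the number of ordered partitions of neighborhoods. A vertex $v$ that is split $i$ times has its ordered neighborhood $N(v)$ (ordered by $\pi_t$, as required by \cref{lemma:crossing_free_position}) partitioned into $i+1$ contiguous blocks, which gives at most $\binom{|N(v)|-1}{i} = O(n^i)$ choices; taking the product over all split vertices yields $\prod_j n^{i_j} = O(n^k)$. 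Crucially, the copies of $v$ are determined up to labelling by their assigned neighborhood, so no extra factor is needed to match blocks to copies.

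Third, I would bound the number of placements. After the splits, there are $s$ distinct split vertices producing $s + k \le 2k$ copies in total, which must be merged with the $|V_b|-s$ unsplit vertices (whose relative order is still $\pi_b$). The number of such merges together with the internal orderings of the copies equals
\[
\frac{(|V_b|+k)!}{(|V_b|-s)!} \;=\; O\!\left(n^{s+k}\right) \;=\; O(n^{2k}).
\]
Fourth, for each of the enumerated configurations we obtain a concrete 2-layer layout with $O(n)$ vertices and $|E|$ edges, whose crossing number can be computed in polynomial time (for instance, by counting inversions of edge-endpoint pairs via a Fenwick tree in $O(|E|\log n)$). Since $k$ is a fixed constant and $4k\ge 4$, this per-configuration polynomial factor is dominated by, or can be folded into, the enumeration bound.

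Multiplying the three enumeration bounds gives $n^k \cdot n^k \cdot n^{2k} = n^{4k}$; correctness follows because every allowed split configuration of $G$ arises as one of the enumerated cases, so the minimum crossing count over all evaluations is optimal. \textbf{The main obstacle} is the accounting of the placement step: one must check that choosing an order on the at most $2k$ copies \emph{and} interleaving them into the fixed sequence $\pi_b$ really gives only $O(n^{2k})$, rather than a larger factor, and that the assignment of neighborhood blocks to copies has already been paid for in the partition step so that it is not double-counted in the placement step.
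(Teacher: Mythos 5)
Your proposal is correct and follows essentially the same route as the paper: enumerate the $O(n^k)$ choices of split vertices, the $O(n^k)$ contiguous neighborhood partitions, and the $O(n^{2k})$ orderings-plus-placements of the at most $2k$ copies into $\pi_b$, then evaluate each resulting layout and keep the minimum. The only cosmetic difference is that you fold the paper's separate $(2k)!$ ordering factor into a single falling-factorial count for the placement step, which yields the same $O(n^{4k})$ bound.
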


\begin{proof}
Let $G^*$ be a crossing-minimal solution after applying $k$ splits on $V_b$ and let us assume that our algorithm would not find a solution with this number of crossings. 
%Given an input $G=(V_t \cup V_b,E)$ with $\pi_t$ and $\pi_b$ the orderings of $V_t$ and $V_b$ respectively, we call $G^*$ the graph with minimum crossing number obtained after $k$ split on the vertices of $V_b$. Let us assume the algorithm does not find the correct set of copies. 
As our algorithm considers all possibilities to apply $k$ splits, it also attempts the splits applied in $G^*$. Similarly, the neighborhood partition of $G^*$ and the copy placement are explicitly considered by the algorithm as it enumerates all possibilities. Hence a solution at least as good as $G^*$ is found, proving correctness.

Let $n_t = |V_t|$ and $n_b = |V_b|$ with $n = n_t + n_b$. The algorithm initially chooses $k$ times from $n_b$ vertices leading to $n_b^k$ possible sets of copies. 
Since a vertex has degree at most $n_t$, there are at most $n_t^k$ possible neighborhoods for each copy. 
Additionally, there are $(2k)!$ orderings of at most $2k$ copies. 
Lastly, there are $n_b^{2k}$ possible placement of the $2k$ ordered copies between the at most $n_b$ unsplit vertices in $\pi_b$.
This leads to an overall runtime of $O((2k)! \cdot n^{4k}) = O(n^{4k})$ to iterate through all possible solutions and select the one with a minimum number of crossings.
\end{proof}

\section{Crossing Reduction Heuristics}

In this section we present two greedy heuristics to iteratively reduce crossings in a two-layer drawing by selecting and splitting vertices; see Algorithms 3--4 in the supplemental material for the pseudocode. The input to the algorithm is a bipartite graph $G = (V_t \cup V_b, E)$, order $\pi_t$ of $V_t$ and order $\pi_b$ of $V_b$. Here we use the barycenter heuristic for the initial orders, but any initial order computed by a crossing-reduction algorithm can also be used.  Additionally, an input parameter $k$ is specified that represents a budget of available split operations. For both heuristics we iteratively perform $k$ splits by selecting the most promising vertex in $V_b$ and a partition of its respective neighbors in $V_t$ into a consecutive set of left neighbors $V^l$ and a consecutive set of right neighbors $V^r$. After splitting, the original vertex receives the set $V^l$ assigned as neighbors and the copy receives the set $V^r$. Next we describe how the vertex to be split is selected in each of the heuristics.

\begin{comment}
\begin{figure}[tb]
    \centering
    \includegraphics[width=0.35\linewidth]{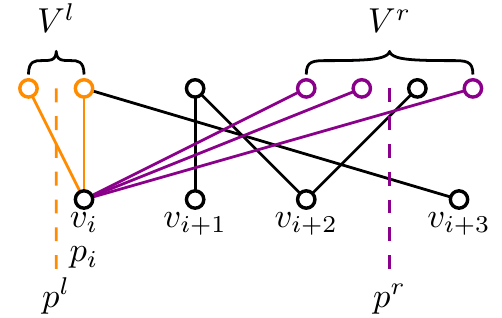}
    \caption{Vertex $v_i$ with span 6 and a split in $V^l$ (span 1) and $V^r$ (span 3). The barycenter of $V^r$ is $p^r$. When moving right from $p_i$ in \textit{CR-count} we process $v_{i+1}$ (reduces $|V^r|$ crossings) and $v_{i+2}$ (reduces $|V^r|$ but adds 1 crossing), but not $v_{i+3}$.}
    \label{fig:heuristics_expl}
\end{figure}
\end{comment}

Firstly, in the case of the \textit{max-span} heuristic we select the vertex $v$ with the maximum \emph{span}, i.e., the maximum distance between its leftmost neighbor and rightmost neighbor in $\pi_t$. Then, we process $v$ by iterating in order over its neighbors and assigning them to either the set of left neighbors $V^l$ or right neighbors $V^r$ depending on the index; %see Fig.~\ref{fig:heuristics_expl}. 
%{\color{blue}see the bottom row of \cref{fig:red_split}.} 
see Fig.~\ref{fig:heuristics_expl}.
In each iteration we compute the sum of the squared span of $V^l$ and $V^r$. The minimum value indicates the best partition of $v$'s neighbors. The complexity of the heuristic is linear in the number of edges $O(|E|)$.  

Secondly, the \textit{CR-count} heuristic selects promising split vertices by computing crossings that can be reduced and selects the vertex with potentially most reduced crossings. 
First, we assign each vertex $v_i$ in $V_b$ a position $p_i \in \mathbb{R}$ which is the barycenter of positions in $\pi_t$ of its neighbors in $V_t$. 
%Then, we process each vertex $v_i$ in ascending order of positions $p_i$. 
Similarly to \textit{max-span}, we iterate in order over the neighbors of all $v_i$ creating partitions $V^l$ and $V^r$. In each iteration we compute the barycenter $p^l$ of $V^l$ and $p^r$ of $V^r$; %see Fig.~\ref{fig:heuristics_expl}. 
%see the bottom row of \cref{fig:red_split}. 
see \cref{fig:heuristics_expl}.
Next, we start from position $p_i$ and move in ascending order processing all other vertices in $V_b$ until we reach $p^r$. 
To process a vertex $v_j \in V_b$ we look at all edges to its neighbors $N(v_j) \subseteq V_t$ and use a case distinction to count how many crossings can be reduced. In the first case a neighbor in $V_t$ is left of the leftmost vertex in $V^r$. Here, we would reduce $|V^r|$ crossings as the new position of $v_r$ would not cross. In the second case a neighbor is between the leftmost and rightmost vertex in $V^r$, i.e., we would reduce some crossings but add others. In the third case a neighbor is right of the rightmost vertex in $V^r$, i.e., no crossings can be reduced. Likewise, we process vertices to the left of $v_i$ up to the barycenter $p^l$. Finally, after computing all potentially reducible crossings for each vertex and split combination, we select the combination reducing the crossings most, assign $V_l$ to the original vertex and $V_r$ to the new copy. Both vertices are positioned in their respective barycenters $p^l$ and $p^r$ in the order. The complexity of the algorithm is $O(|V_b|^2 |V_t|)$.

\newpage

\begin{algorithm}[H]
\caption{max-span heuristic}\label{alg:maxspan}
    \begin{algorithmic}
        \Require $G=(V_t \cup V_b, E)$, $\pi_t$, $\pi_b$, $k$
        \Ensure $G^\prime$, $V^\prime_b$, $\pi_b^\prime$

        \State $G^\prime \gets G$
        \State $V^\prime_b \gets V_b$
        \State $\pi_b^\prime \gets \pi_b$

        \While{$k > 0$}
            \State $maxSpan \gets 0$
            \State $splitNode \gets None$
        
            \For{$i = 1, \cdots, |V^\prime_b|$}
                \State \Comment{compute index of leftmost/rightmost neighbor of $v_i$ in $\pi_t$} 
                \State $minNeighbor \gets \text{leftmostNeighbor}(v_i, N(v_i), \pi_t)$ 
                \State $maxNeighbor \gets \text{rightmostNeighbor}(v_i, N(v_i), \pi_t)$  
                \State $span \gets maxNeighbor - minNeighbor$

                \If{$span > maxSpan$}
                    \State $maxSpan \gets span$
                    \State $splitNode \gets v_i$
                \EndIf
            \EndFor

            \If{$maxSpan == 0$} \Comment{no split possible}
                \State \Return $G, V^\prime_b, \pi^\prime_b$
            \Else
                \State $N^o \gets orderByIndex(N(splitNode), \pi_t)$
                \State $V^l \gets \emptyset$
                \State $V^r \gets N^o$
                \State $minSpan \gets (\text{computeSpan}(V^r, \pi_t))^2$
                \State $V^l_{s} \gets  V^l$
                \State $V^r_{s} \gets  V^r$
                \For{$i=1,\dots,|N^o|$}
                    \State $V^l \gets V^l \cup \{v_i\}$
                    \State $V^r \gets V^r \setminus \{v_i\}$
                    \State $span \gets (\text{computeSpan}(V^l, \pi_t))^2 + (\text{computeSpan}(V^r, \pi_t))^2$ 

                    \If{$span < minSpan$}
                        \State $minSpan \gets span$
                        \State $V^l_{s} \gets  V^l$
                        \State $V^r_{s} \gets  V^r$
                    \EndIf
                \EndFor

                \State \Comment{update $G$ and $\pi_b$ by removing $v_i$ and adding $v_l$ and $v_r$ with neighbors $V^l_s$ and $V^r_s$}
                \State \Comment{$v_l$ and $v_r$ get the placed in the barycenter of $V^l_s$ and $V^r_s$ in the order $\pi^\prime_b$}
                \State $\text{updateGraphAndOrder}(G^\prime, V^\prime_b, \pi^\prime_b, V^l_s, V^r_s)$
            \EndIf
            
            \State $k \gets k - 1$
    
        \EndWhile

        \State \Return $G, V^\prime_b, \pi_b$
    \end{algorithmic}
\end{algorithm}

\begin{algorithm}[H]
\caption{CR-count heuristic}\label{alg:crcount}
    \begin{algorithmic}
        \Require $G=(V_t \cup V_b, E)$, $\pi_t$, $\pi_b$, $k$
        \Ensure $G^\prime$, $V^\prime_b$, $\pi_b^\prime$

        \State $G^\prime \gets G$
        \State $V^\prime_b \gets V_b$
        \State $\pi_b^\prime \gets \pi_b$

        \While{$k > 0$}

            \State $V^l_s \gets \emptyset$
            \State $V^r_s \gets \emptyset$
            \State $bestSplit \gets 0$
                
            \For{$i=1,\dots,|V_b^\prime|$}
                \State $N^o \gets orderByIndex(N(splitNode), \pi_t)$
                \State $V^l \gets \emptyset$
                \State $V^r \gets N^o$
                \State $p_i \gets \text{computeBarycenter}(G, N^o, \pi_t)$

                \For{$j=1,\dots,|N^o|$}
                    \State $V^l \gets V^l \cup \{v_j\}$
                    \State $V^r \gets V^r \setminus \{v_j\}$
                    \State $p^l \gets \text{computeBarycenter}(G, V^l, \pi_t)$
                    \State $p^r \gets \text{computeBarycenter}(G, V^r, \pi_t)$

                    \State \Comment{count avoidable crossings in a range, i.e. between $p_i$ and $p^r$}
                    \State $count^l \gets \text{countCrossingsInRange}(G, \pi_t, \pi^\prime_b, p^l, p_i)$
                    \State $count^r \gets \text{countCrossingsInRange}(G, \pi_t, \pi^\prime_b, p_i, p^r)$
                
                    \If{$bestSplit < count^l + count^r$}
                        \State $bestSplit \gets count^l + count^r$ 
                        \State $V^l_s \gets V^l$
                        \State $V^r_s \gets V^r$
                    \EndIf
                \EndFor
                
            \EndFor

            \If{$bestSplit == 0$} \Comment{no split possible}
                \State \Return $G, V^\prime_b, \pi^\prime_b$
            \Else
                \State \Comment{update $G$ and $\pi_b$ by removing $v_i$ and adding $v_l$ and $v_r$ with neighbors $V^l_s$ and $V^r_s$}
                \State \Comment{$v_l$ and $v_r$ get the placed in the barycenter of $V^l_s$ and $V^r_s$ in the order $\pi^\prime_b$}
                \State $\text{updateGraphAndOrder}(G^\prime, V^\prime_b, \pi^\prime_b, V^l_s, V^r_s)$
            \EndIf
            
            \State $k \gets k - 1$
    
        \EndWhile

        \State \Return $G, V^\prime_b, \pi_b$
    \end{algorithmic}
\end{algorithm}

\section{Experimental Results}
%In Section~\ref{sec:introduction}, 
We have experimentally evaluated four of the five algorithms described earlier with 22 real-world datasets: the exact algorithm of \crs{k} with Fixed Order, the exact algorithm of \crsv{k} with Fixed Order, and two  heuristics for crossing reduction. The algorithm behind Thm.~\ref{th:crms-fixed-algo} is inefficient in practice.
We analyze performance w.r.t. the number of crossings in the layouts, number of vertex splits, number of vertices that we split, the maximum number of splits, and running time.
%{\color{red}We have evaluated our algorithm in real-life datasets. In this section, we discuss the experimental design, datasets, interactive tool design, and evaluation results.}
%In Section~\ref{sec:introduction}, 
\subsection{Experimental Design}
We have mentioned that 2-layer drawings have been applied in visualizing graphs defined on anatomical structures and cell types in the human body~\cite{github}. There exists a variety of cell types, genes, and proteins related to different organs of the human body. Hierarchical structures have been used to show the relationship between organs to anatomical structures, anatomical structures to cell types, and cell types to genes/proteins. Cell types and genes/proteins situate on a particular layer, unlike anatomical structures. Hence, we can consider a 2-layer graph $G$ where cell types represent one layer and genes/proteins represent another layer and analyze $G$ before and after splitting. In this section, we consider the real-world 2-layer graphs generated from the dataset of different organs and show the experimental results obtained on those graphs. 

\subsection{Datasets}
We use 22 real-world instances of 2-layer graphs from~\cite{github}; see Table~\ref{tab:graph_stat}.
%of different organs, e.g., Bone marrow, brain, heart, etc. Namely, we have considered 22 graphs each corresponding to one organ; see Table~\ref{tab:graph_stat}. 
%All graphs have minimum vertex degree~1. 
%Throughout the experiments, all algorithms required at most 15ms computation time for each of the graphs.
%The maximum number of nodes is 552 for the graph corresponding to the Thymus. The minimum number of nodes is 2 for the graph corresponding to the Lymph Vasculature. The maximum degree is 662 for the graph corresponding to the Bone Marrow. The minimum degree is 1 for the graph corresponding to the Lymph Vasculature. The maximum number of cell types is 127 for the graph corresponding to the Brain. The minimum number of cell types is 1 for the graph corresponding to the Peripheral Nervous System. The maximum number of genes/proteins is 511 for the graph corresponding to the Thymus. The minimum number of genes/proteins is 1 for the graph corresponding to the Blood Vasculature. The maximum density (density = $2m/(n(n-1))$) is 1 for the graph corresponding to the Lymph Vasculature. The minimum density is 0.004 for the graph corresponding to the Thymus. The maximum degree is 93 for the graph corresponding to the Thymus. The minimum degree is 1 for the graph corresponding to the Blood

\begin{table}[H]
    \begin{center}
    \begin{tabular}{|c|c|c|c|c|c|c|c|}
    \hline
    \textbf{Organ} & \textbf{$|V|$} & \textbf{$|E|$} & \textbf{Cell types} & \textbf{Genes/proteins} & \textbf{Density} & \textbf{Max degree}\\ \hline
    Blood & 179 & 461 & 30 & 149 & 0.0289 & 57 \\ \hline
    Fallopian Tube & 42 & 32 & 19 & 23 & 0.0371 & 3 \\ \hline
    Lung & 231 & 231 & 69 & 162 & 0.008 & 8 \\ \hline
    Peripheral Nervous System & 3 & 2 & 1 & 2 & 0.666 & 2 \\ \hline
    Thymus & 552 & 658 & 41 & 511 & 0.00432 & 93 \\ \hline
    %Blood Vasculature & 3 & 2 & 2 & 1 & 0.666 & 2 \\ \hline
    Heart & 60 & 51 & 15 & 45 & 0.028 & 7 \\ \hline
    Lymph Nodes & 299 & 491 & 44 & 255 & 0.0110 & 36 \\ \hline
    Prostate & 43 & 36 & 12 & 31 & 0.039 & 3 \\ \hline
    Ureter & 44 & 53 & 14 & 30 & 0.0560 & 9 \\ \hline
    Bone Marrow & 343 & 662 & 45 & 298 & 0.011 & 25 \\ \hline
    Kidney & 201 & 237 & 58 & 143 & 0.011 & 8 \\ \hline
    %Lymph Vasculature & 2 & 1 & 1 & 1 & 1 & 1 \\ \hline
    Skin & 102 & 90 & 36 & 66 & 0.017 & 7 \\ \hline
    Urinary Bladder & 46 & 55 & 15 & 31 & 0.053 & 9 \\ \hline
    Brain & 381 & 346 & 127 & 254 & 0.004 & 5 \\ \hline
    Large Intestine & 124 & 139 & 51 & 73 & 0.0182 & 8 \\ \hline
    Ovary & 9 & 6 & 3 & 6 & 0.166 & 2 \\ \hline
    Small Intestine & 18 & 13 & 5 & 13 & 0.084 & 4 \\ \hline
    Uterus & 61 & 65 & 16 & 45 & 0.035 & 9 \\ \hline
    Eye & 145 & 270 & 47 & 98 & 0.0258 & 68 \\ \hline
    Liver & 73 & 57 & 26 & 47 & 0.0216 & 5 \\ \hline
    Pancreas & 69 & 100 & 29 & 40 & 0.042 & 12 \\ \hline
    Spleen & 290 & 414 & 65 & 225 & 0.009 & 23 \\ \hline
    %Vasculature & 3 & 2 & 2 & 1 & 0.666 & 2 \\ \hline
    \end{tabular}
    \end{center}
    \caption{Statistics about the organ graphs from the HubMAP dataset~\cite{github}. The density of a graph $G=(V,E)$ with $V=V_t \cup V_b$ is defined as $2|E|/(|V|(|V|-1))$.}
    \label{tab:graph_stat}
\end{table}

\subsection{Interactive Tool Design}
We developed an interactive tool, where the user can upload a dataset and visualize the corresponding 2-layer graph; see Fig.~\ref{fig:interface}. 
A dataset can be loaded by pasting JSON-formatted text in the input area on the left-hand side of the interface.
%A user can upload the dataset in the input text area at the left side of the interface using a JSON format. 
We use blue and red colors to draw nodes that represent cell types and genes/proteins respectively. There is a legend in the top left corner of the interface to describe the color code. Instead of using top and bottom layers, we use left and right layers, for easier node labeling (i.e., the left and right layers represent $V_t$ and $V_b$). 
There are multiple radio buttons to the configuration of the drawing. 
The user can fix the order of either the blue vertices or the red vertices by selecting one set of radio buttons. 
The number of vertex splits depends on the initial layout. 
We consider two types of initial layouts: the vertices in each layer are positioned in alphabetical order, or in barycentric order~\cite{SugiyamaTT81}. 
The user can select an initial order from the input interface by using another set of radio buttons. There are ``Draw'' and ``Split'' buttons in the interface. Once the user selects an order for the left layer and clicks the draw button, then the initial ordered layout will be shown on the right side of the interface. Clicking the split button replaces the initial layout and shows the final layout on the right side of the interface.

The right output interface is interactive; the user can see further details using different interactions. 
When the graph is large the user can scroll up and down to see different parts of the layout. 
The user can highlight the adjacent edges by clicking on a particular vertex in case of dense layouts. 
We keep the label texts less than or equal to ten characters. 
If a label is longer then we show the first ten characters and truncate the rest. 
If the user puts the mouse over the label or the corresponding vertex, a pop-up message will show the full label. 
If the user moves out the mouse, the message will be removed too. 
Besides showing the full label, we also provide other useful information, e.g., the degree and ID of the vertex; see Figure~\ref{fig:interaction}.

\begin{figure}[H]
    \centering
    \begin{subfigure}[c]{.39\textwidth}
    \centering
    \includegraphics[page=2, width=\textwidth]{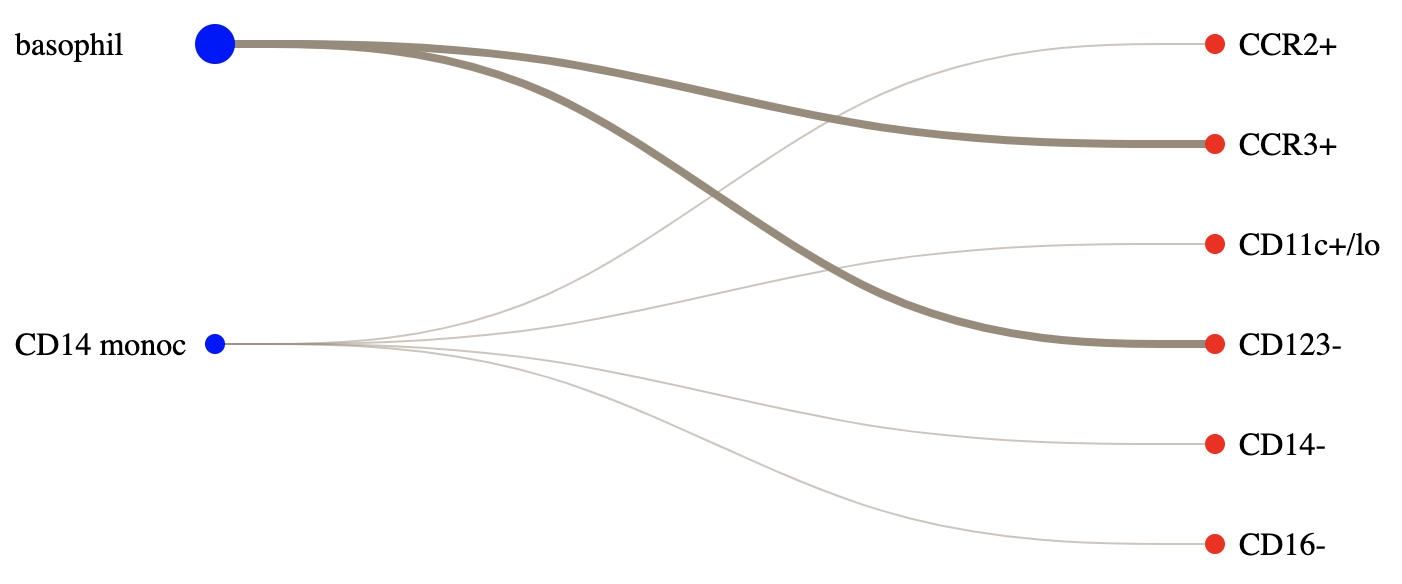}
    \caption{The system highlights the adjacent edges when the user clicks on a vertex (``basophil'' in this case).}
    \end{subfigure}\hfill
    \begin{subfigure}[c]{.39\textwidth}
    \centering
    \includegraphics[page=3, width=\textwidth]{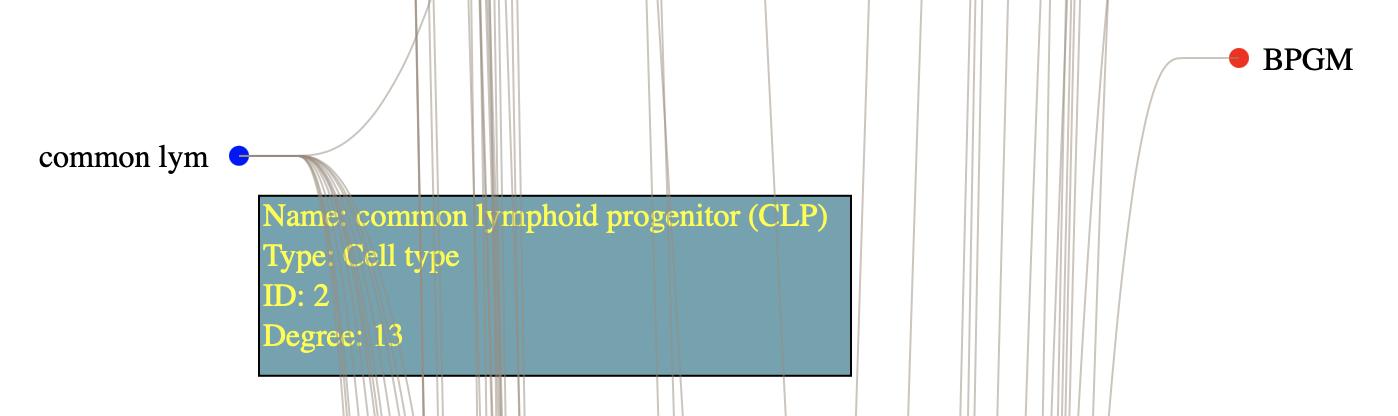}
    \caption{A pop-up message showing the full label, and other related information.}
%    \captionAddToLinenumber{ }{-1}
    \end{subfigure}
    \caption{Interacting with the system.}
    \label{fig:interaction}
\end{figure}

\begin{figure}[H]
  \begin{minipage}[c]{0.48\textwidth}
    \includegraphics[width=1.0\textwidth]{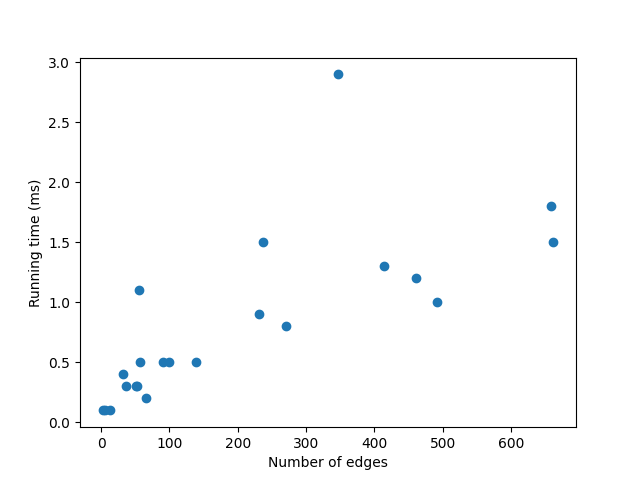}
  \end{minipage}\hfill
  \begin{minipage}[c]{0.48\textwidth}
    \caption{Showing the running time w.r.t. the number of edges. The vertices $V_t$ are the cell types and the vertices $V_b$ are the genes/proteins. The vertices are alphabetically ordered. This is corresponding to the exact algorithm of \crs{k} with Fixed Order. We can see as the number of edges increase the running time increases as expected.}
  \end{minipage}
    \label{fig:edges_time}
\end{figure}

\begin{figure}[H]
    \centering
    \begin{subfigure}[b]{0.70\textwidth}
    \includegraphics[width=\textwidth]{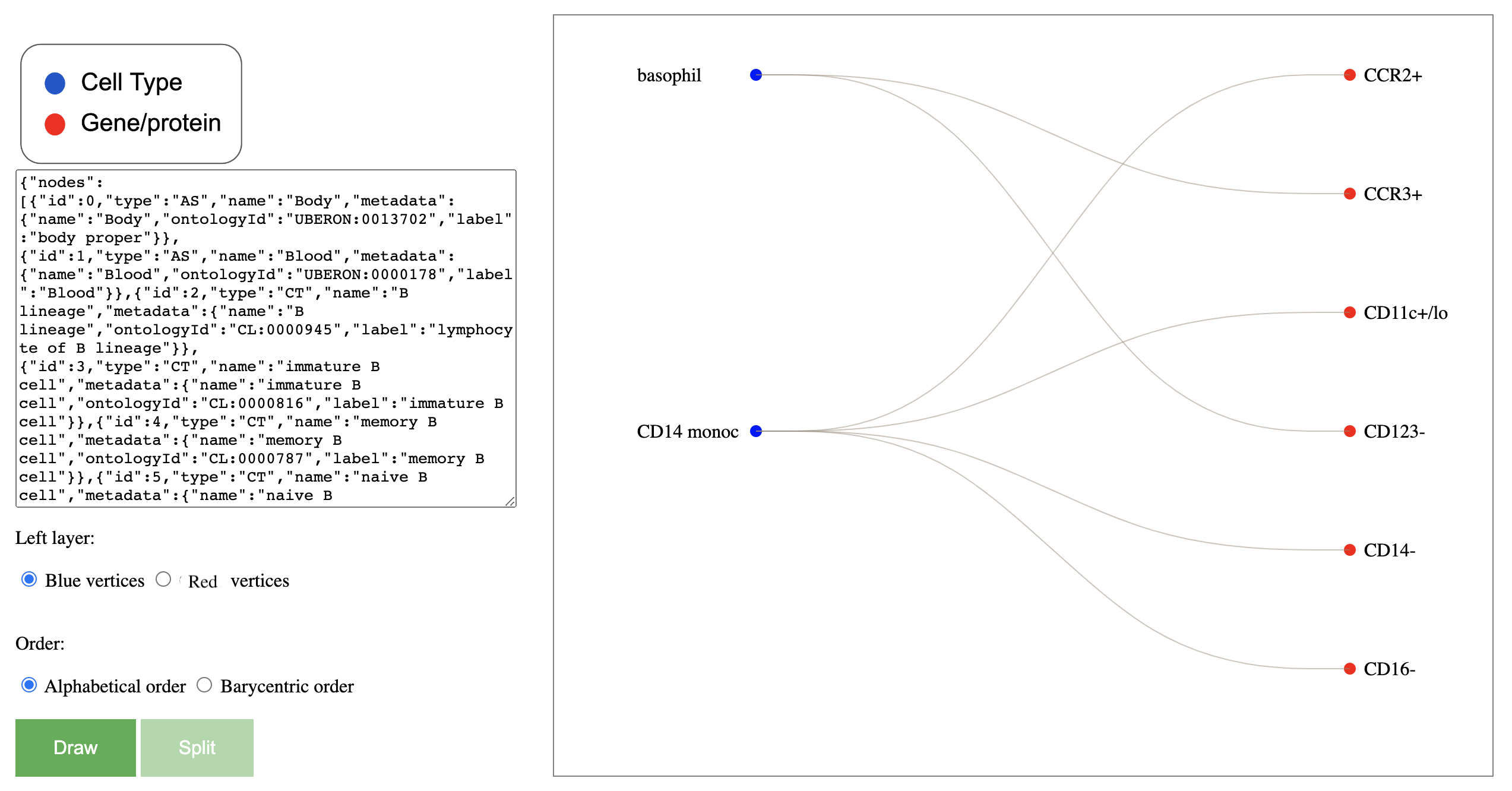}
    \caption{The input layout on the right side appears after inserting the dataset into the text area and clicking the draw button.}
    \end{subfigure}
    \begin{subfigure}[b]{0.70\textwidth}
    \includegraphics[width=\textwidth]{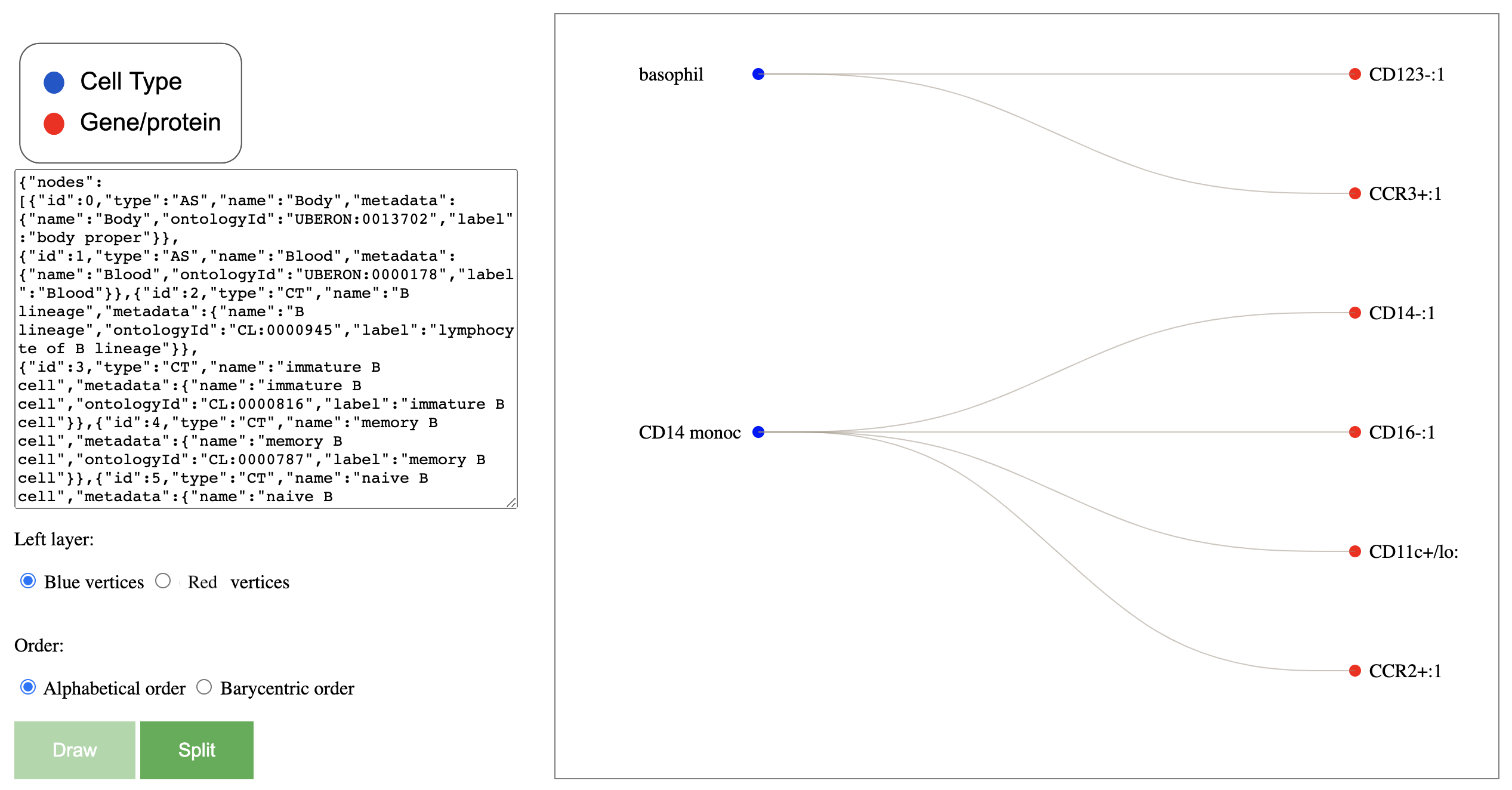}
    \caption{The output layout appears on the right side after clicking the split button.}
    \end{subfigure}
    \caption{The user interface. Datasets of the human body~\cite{github} can be uploaded, processed with the presented algorithms, and visualized.}
    \label{fig:interface}
\end{figure}
    
\subsection{Evaluation Results}
We first evaluate the exact algorithms for \crs{k} and \crsv{k} with Fixed Order. We have run experiments on 22 organ graphs on four settings: (1) the blue vertices (cell type) are fixed and the initial layout is generated using alphabetical order, (2) the red vertices (gene/protein) are fixed and the initial layout is generated using alphabetical order, (3) the blue vertices are fixed and initial layout is generated using barycentric heuristic, and (4) the red vertices are fixed and initial layout is generated using the barycentric heuristic. For each setting, we provide the initial number of crossings, number of vertices in the top (or left) layers that have fixed order, the number of bottom (or right) vertices, the number of splits, the number of split vertices, and the maximum number of splits; see Tables 2--5 in the supplemental material. 
%The results of settings 1-4 are shown in Tables~\ref{tab:blue_alphabetical}-\ref{tab:green_barycentric respectively.
The number of crossings in the initial layouts generated from alphabetical order is 2.7 times larger in total than in layouts generated by the barycentric heuristic. The number of splits is 1.58 times larger in total when we fix the gene/protein vertices (Tables 3 and 5). Note that for all organ graphs, the number of gene/protein vertices is relatively larger compared to the cell type vertices. When the cell-type vertices are fixed, there is more flexibility for splitting. Hence the numbers of splits in Tables 2 and 4 are smaller than in Tables 3 and 5. Similarly, the maximum number of splits is 2.5 times larger in total when the gene/protein vertices are fixed.

\begin{table}[H]
    \begin{center}
    \begin{tabular}{|m{5em}|c|m{.5em}|m{1.5em}|c|m{3.5em}|m{2.5em}|m{2.5em}|c|m{3.5em}|m{2.5em}|m{2.5em}|}
    \hline
     \multirow{2}{4em}{\textbf{Organ}} & \multirow{2}{4em}{\textbf{Crossings}} & \multirow{2}{4em}{\textbf{$|V_t|$}} & \multirow{2}{4em}{\textbf{$|V_b|$}} & \multicolumn{4}{c|}{\textbf{\crs{k} with Fixed Order}} & \multicolumn{4}{c|}{\textbf{\crsv{k} with Fixed Order}} \\ 
     \cline{5-12}
      &  &  &  & \textbf{Splits} & \textbf{Split vertices} & \textbf{Max splits} & \textbf{Time (ms)} & \textbf{Splits} & \textbf{Split vertices} & \textbf{Max splits} & \textbf{Time (ms)} \\ 
     \hline
     Bone Marrow & 111440 & 45 & 298 & 343 & 136 & 16 & 1.5 & 343 & 134 & 17 & 1.5 \\  
     \hline
     Brain & 28345 & 127 & 254 & 78 & 63 & 4 & 2.9 &  78 & 63 & 4 & 2.9 \\
     \hline
     Heart & 504 & 15 & 45 & 6 & 6 & 1 & 0.3 &  6 & 6 & 1 & 0.4 \\
     \hline
     Kidney & 13347 & 58 & 143 & 85 & 52 & 4 & 1.5 & 85 & 52 & 4 & 1.5 \\
     \hline
     Large intestine & 4778 & 51 & 73 & 58 & 25 & 6 & 0.5 &  59 & 24 & 6 & 0.6 \\
     \hline
     Lung & 11654 & 69 & 162 & 63 & 39 & 6 & 0.9 &  63 & 39 & 6 & 0.9 \\
     \hline
     Lymph nodes & 59709 & 44 & 255 & 213 & 100 & 10 & 1.0 & 213 & 100 & 10 & 1.0 \\
     \hline
     Skin & 2066 & 36 & 66 & 19 & 11 & 3 & 0.5 &  19 & 11 & 3 & 0.4 \\
     \hline
     Spleen & 40565 & 65 & 225 & 165 & 75 & 10 & 1.3 & 166 & 75 & 11 & 1.4 \\
     \hline
     Thymus & 102067 & 41 & 511 & 135 & 100 & 5 & 1.8 &  135 & 100 & 5 & 1.8 \\
     \hline
     %Vasculature & 0 & 2 & 1 & 0 & 0 & 0 \\
     %\hline
     Eye & 17046 & 47 & 98 & 164 & 78 & 5 & 0.8 &  166 & 78 & 5 & 0.9 \\
     \hline
     Fallopian Tube & 153 & 19 & 23 & 6 & 5 & 2 & 0.4 &  6 & 5 & 2 & 0.4 \\
     \hline
     Liver & 625 & 26 & 47 & 9 & 8 & 2 & 0.5 &  9 & 8 & 2 & 0.5 \\
     %\hline
     %Lymph Vasculature & 0 & 1 & 1 & 0 & 0 & 0 \\
     \hline
     Pancreas & 2510 & 29 & 40 & 56 & 32 & 6 & 0.5 &  56 & 32 & 6 & 0.6 \\
     \hline
     Peripheral Nervous System & 0 & 1 & 2 & 0 & 0 & 0 & 0.1 &  0 & 0 & 0 & 0.1 \\
     \hline
     Prostate & 405 & 12 & 31 & 3 & 3 & 1 & 0.3 &  3 & 3 & 1 & 0.3 \\
     \hline
     Ovary & 8 & 3 & 6 & 0 & 0 & 0 & 0.1 &  0 & 0 & 0 & 0.1 \\
     \hline
     Small Intestine & 28 & 5 & 13 & 0 & 0 & 0 & 0.1 &  0 & 0 & 0 & 0.1 \\
     \hline
     Ureter & 512 & 14 & 30 & 21 & 19 & 2 & 0.3 &  21 & 19 & 2 & 0.4 \\
     \hline
     Urinary Bladder & 628 & 15 & 31 & 21 & 21 & 1 & 1.1 &  21 & 21 & 1 & 1.1 \\
     \hline
     Uterus & 1147 & 16 & 45 & 20 & 12 & 4 & 0.2 &  20 & 12 & 4 & 0.2 \\
     \hline
     Blood & 49071 & 30 & 149 & 288 & 131 & 11 & 1.2 &  290 & 131 & 14 & 1.3 \\
     \hline
    \end{tabular}
    \end{center}
%    \caption{Blue vertices are fixed, crossings computed using alphabetical order.}
    \caption{The vertices $V_t$ are the cell types and the vertices $V_b$ are the genes/proteins. The vertices are alphabetically ordered.}
    \label{tab:blue_alphabetical}
\end{table}

\begin{table}[H]
    \begin{center}
    \begin{tabular}{|m{4em}|c|m{1.5em}|m{1.5em}|c|m{3.5em}|m{2.5em}|m{2.5em}|c|m{3.5em}|m{2.5em}|m{2.5em}|}
    \hline
     \multirow{2}{4em}{\textbf{Organ}} & \multirow{2}{4em}{\textbf{Crossings}} & \multirow{2}{4em}{\textbf{$|V_t|$}} & \multirow{2}{4em}{\textbf{$|V_b|$}} & \multicolumn{4}{c|}{\textbf{\crs{k} with Fixed Order}} & \multicolumn{4}{c|}{\textbf{\crsv{k} with Fixed Order}} \\ 
     \cline{5-12}
      &  &  &  & \textbf{Splits} & \textbf{Split vertices} & \textbf{Max splits} & \textbf{Time (ms)} & \textbf{Splits} & \textbf{Split vertices} & \textbf{Max splits} & \textbf{Time (ms)} \\  
     \hline
     Bone Marrow & 111440 & 298 & 45 & 569 & 45 & 22 & 1.7 & 569 & 45 & 22 & 1.6 \\  
     \hline
     Brain & 28345 & 254 & 127 & 214 & 124 & 4 & 3.1 &  214 & 124 & 4 & 3.1 \\
     \hline
     Heart & 504 & 45 & 15 & 30 & 14 & 4 & 0.3 &  30 & 14 & 4 & 0.3 \\
     \hline
     Kidney & 13347 & 143 & 58 & 164 & 56 & 7 & 1.4 &  164 & 56 & 7 & 1.3 \\
     \hline
     Large intestine & 4778 & 73 & 51 & 76 & 32 & 6 & 0.5 &  76 & 32 & 6 & 0.4 \\
     \hline
     Lung & 11654 & 162 & 69 & 151 & 64 & 4 & 1.1 &  151 & 64 & 4 & 1.1 \\
     \hline
     Lymph nodes & 59709 & 255 & 44 & 378 & 42 & 26 & 1.0 &  379 & 42 & 27 & 0.9 \\
     \hline
     Skin & 2066 & 66 & 36 & 48 & 32 & 2 & 0.6 &  48 & 32 & 2 & 0.6 \\
     \hline
     Spleen & 40565 & 225 & 65 & 311 & 58 & 20 & 1.5 &  312 & 58 & 19 & 1.5 \\
     \hline
     Thymus & 102067 & 511 & 41 & 514 & 36 & 69 & 2.1 &  514 & 36 & 69 & 2.1 \\
     %\hline
     %Vasculature & 0 & 1 & 2 & 0 & 0 & 0 \\
     \hline
     Eye & 17046 & 98 & 47 & 172 & 29 & 40 & 0.9 &  179 & 29 & 37 & 0.9 \\
     \hline
     Fallopian Tube & 153 & 23 & 19 & 12 & 8 & 2 & 0.5 &  12 & 8 & 2 & 0.5 \\
     \hline
     Liver & 625 & 47 & 26 & 27 & 12 & 4 & 0.7 &  27 & 12 & 4 & 0.6 \\
     %\hline
     %Lymph Vasculature & 0 & 1 & 1 & 0 & 0 & 0 \\
     \hline
     Pancreas & 2510 & 40 & 29 & 53 & 22 & 7 & 0.6 &  53 & 22 & 7 & 0.6 \\
     \hline
     Peripheral Nervous System & 0 & 2 & 1 & 0 & 0 & 0 & 0.1 &  0 & 0 & 0 & 0.1 \\
     \hline
     Prostate & 405 & 31 & 12 & 20 & 12 & 2 & 0.3 &  20 & 12 & 2 & 0.3 \\
     \hline
     Ovary & 8 & 6 & 3 & 3 & 3 & 1 & 0.1 &  3 & 3 & 1 & 0.1 \\
     \hline
     Small Intestine & 28 & 13 & 5 & 7 & 3 & 3 & 0.1 &  7 & 3 & 3 & 0.1 \\
     \hline
     Ureter & 512 & 30 & 14 & 28 & 13 & 6 & 0.5 &  28 & 13 & 6 & 0.5 \\
     \hline
     Urinary Bladder & 628 & 31 & 15 & 30 & 14 & 5 & 1.1 &  30 & 14 & 5 & 1.2 \\
     \hline
     Uterus & 1147 & 45 & 16 & 44 & 16 & 6 & 0.2 &  44 & 16 & 6 & 0.2 \\
     \hline
     Blood & 49071 & 149 & 30 & 379 & 30 & 40 & 1.4 &  379 & 30 & 40 & 1.4 \\
     \hline
    \end{tabular}
    \end{center}
    \caption{The vertices $V_t$ are the genes/proteins and the vertices $V_b$ are the cell types. The vertices are alphabetically ordered.}
    \label{tab:green_alphabetical}
\end{table}

\begin{table}[H]
    \begin{center}
    \begin{tabular}{|m{4em}|c|m{1.5em}|m{1.5em}|c|m{3.5em}|m{2.5em}|m{2.5em}|c|m{3.5em}|m{2.5em}|m{2.5em}|}
    \hline
     \multirow{2}{4em}{\textbf{Organ}} & \multirow{2}{4em}{\textbf{Crossings}} & \multirow{2}{4em}{\textbf{$|V_t|$}} & \multirow{2}{4em}{\textbf{$|V_b|$}} & \multicolumn{4}{c|}{\textbf{\crs{k} with Fixed Order}} & \multicolumn{4}{c|}{\textbf{\crsv{k} with Fixed Order}} \\ 
     \cline{5-12}
      &  &  &  & \textbf{Splits} & \textbf{Split vertices} & \textbf{Max splits} & \textbf{Time (ms)} & \textbf{Splits} & \textbf{Split vertices} & \textbf{Max splits} & \textbf{Time (ms)} \\  
     \hline
     Bone Marrow & 32599 & 45 & 298 & 323 & 130 & 17 & 1.5 & 323 & 124 & 17 & 1.6 \\ 
     \hline
     Brain & 4773 & 127 & 254 & 71 & 60 & 3 & 3.2 & 72 & 59 & 3 & 3.1 \\
     \hline
     Heart & 211 & 15 & 45 & 6 & 6 & 1 & 0.4 &  6 & 6 & 1 & 0.3 \\
     \hline
     Kidney & 1207 & 58 & 143 & 69 & 46 & 4 & 1.4 &  70 & 45 & 4 & 1.5 \\
     \hline
     Large intestine & 1878 & 51 & 73 & 49 & 25 & 6 & 0.7 &  50 & 24 & 7 & 0.6 \\
     \hline
     Lung & 1970 & 69 & 162 & 64 & 37 & 7 & 1.3 &  64 & 37 & 7 & 1.2 \\
     \hline
     Lymph nodes & 33315 & 44 & 255 & 214 & 104 & 10 & 1.3 &  216 & 104 & 10 & 1.2 \\
     \hline
     Skin & 339 & 36 & 66 & 14 & 9 & 3 & 0.8 &  14 & 9 & 3 & 0.7 \\
     \hline
     Spleen & 24833 & 65 & 225 & 171 & 76 & 9 & 1.7 &  172 & 75 & 9 & 1.8 \\
     \hline
     Thymus & 34863 & 41 & 511 & 136 & 102 & 5 & 2.3 &  137 & 101 & 5 & 2.4 \\
     %\hline
     %Vasculature & 0 & 2 & 1 & 0 & 0 & 0 \\
     \hline
     Eye & 8576 & 47 & 98 & 154 & 73 & 5 & 1.1 &  155 & 72 & 5 & 1.2 \\
     \hline
     Fallopian Tube & 47 & 19 & 23 & 7 & 7 & 1 & 0.7 &  7 & 7 & 1 & 0.6 \\
     \hline
     Liver & 84 & 26 & 47 & 8 & 7 & 2 & 0.9 &  8 & 7 & 2 & 0.9 \\
     %\hline
     %Lymph Vasculature & 0 & 1 & 1 & 0 & 0 & 0 \\
     \hline
     Pancreas & 925 & 29 & 40 & 46 & 29 & 5 & 0.8 &  46 & 29 & 5 & 0.7 \\
     \hline
     Peripheral Nervous System & 0 & 1 & 2 & 0 & 0 & 0 & 0.1 &  0 & 0 & 0 & 0.1 \\
     \hline
     Prostate & 64 & 12 & 31 & 3 & 3 & 1 & 0.4 &  3 & 3 & 1 & 0.4 \\
     \hline
     Ovary & 0 & 3 & 6 & 0 & 0 & 0 & 0.1 &  0 & 0 & 0 & 0.1  \\
     \hline
     Small Intestine & 0 & 5 & 13 & 0 & 0 & 0 & 0.1 &  0 & 0 & 0 & 0.1 \\
     \hline
     Ureter & 126 & 14 & 30 & 18 & 18 & 1 & 0.6 &  18 & 18 & 1 & 0.7 \\
     \hline
     Urinary Bladder & 151 & 15 & 31 & 18 & 18 & 1 & 1.3 &  18 & 18 & 1 & 1.3 \\
     \hline
     Uterus & 135 & 16 & 45 & 13 & 10 & 3 & 0.3 &  14 & 10 & 3 & 0.3 \\
     \hline
     Blood & 19308 & 30 & 149 & 284 & 130 & 13 & 1.7 &  284 & 130 & 13 & 1.6 \\
     \hline
    \end{tabular}
    \end{center}
    \caption{The vertices $V_t$ are the cell types and the vertices $V_b$ are the genes/proteins. The vertices are ordered by repeatedly applying the barycentric heuristic on both sides.}
    \label{tab:blue_barycentric}
\end{table}

\begin{table}[H]
    \begin{center}
    \begin{tabular}{|m{4em}|c|m{1.5em}|m{1.5em}|c|m{3.5em}|m{2.5em}|m{2.5em}|c|m{3.5em}|m{2.5em}|m{2.5em}|}
    \hline
     \multirow{2}{4em}{\textbf{Organ}} & \multirow{2}{4em}{\textbf{Crossings}} & \multirow{2}{4em}{\textbf{$|V_t|$}} & \multirow{2}{4em}{\textbf{$|V_b|$}} & \multicolumn{4}{c|}{\textbf{\crs{k} with Fixed Order}} & \multicolumn{4}{c|}{\textbf{\crsv{k} with Fixed Order}} \\ 
     \cline{5-12}
      &  &  &  & \textbf{Splits} & \textbf{Split vertices} & \textbf{Max splits} & \textbf{Time (ms)} & \textbf{Splits} & \textbf{Split vertices} & \textbf{Max splits} & \textbf{Time (ms)} \\ 
     \hline
     Bone Marrow & 32599 & 298 & 45 & 412 & 45 & 15 & 1.8 & 412 & 45 & 14 & 1.7 \\  
     \hline
     Brain & 4773 & 254 & 127 & 86 & 78 & 3 & 3.2 &  86 & 70 & 3 & 3.2 \\
     \hline
     Heart & 211 & 45 & 15 & 16 & 11 & 3 & 0.4 &  16 & 11 & 3 & 0.4 \\
     \hline
     Kidney & 1207 & 143 & 58 & 94 & 49 & 5 & 1.5 &  94 & 47 & 4 & 1.5 \\
     \hline
     Large intestine & 1878 & 73 & 51 & 47 & 24 & 5 & 0.8 &  50 & 24 & 5 & 0.7 \\
     \hline
     Lung & 1970 & 162 & 69 & 74 & 49 & 3 & 1.3 &  74 & 47 & 3 & 1.3 \\
     \hline
     Lymph nodes & 33315 & 255 & 44 & 298 & 41 & 24 & 1.4 &  299 & 41 & 23 & 1.3 \\
     \hline
     Skin & 339 & 66 & 36 & 22 & 19 & 2 & 0.8 &  22 & 19 & 2 & 0.8 \\
     \hline
     Spleen & 24833 & 225 & 65 & 223 & 59 & 14 & 1.8 &  227 & 59 & 15 & 1.8 \\
     \hline
     Thymus & 34863 & 511 & 41 & 320 & 35 & 46 & 2.4 &  320 & 34 & 48 & 2.5 \\
     %\hline
     %Vasculature & 0 & 1 & 2 & 0 & 0 & 0 \\
     \hline
     Eye & 8576 & 98 & 47 & 146 & 25 & 34 & 1.1 &  149 & 25 & 35 & 1.1 \\
     \hline
     Fallopian Tube & 47 & 23 & 19 & 5 & 5 & 1 & 0.6 &  5 & 5 & 1 & 0.6 \\
     \hline
     Liver & 84 & 47 & 26 & 3 & 7 & 2 & 1.1 &  3 & 7 & 2 & 1.0 \\
     %\hline
     %Lymph Vasculature & 0 & 1 & 1 & 0 & 0 & 0 \\
     \hline
     Pancreas & 925 & 40 & 29 & 42 & 21 & 6 & 0.8 &  42 & 20 & 6 & 0.8 \\
     \hline
     Peripheral Nervous System & 0 & 2 & 1 & 0 & 0 & 0 & 0.1 &  0 & 0 & 0 & 0.1 \\
     \hline
     Prostate & 64 & 31 & 12 & 13 & 12 & 2 & 0.5 &  13 & 12 & 2 & 0.4 \\
     \hline
     Ovary & 0 & 6 & 3 & 0 & 0 & 0 & 0.1 &  0 & 0 & 0 & 0.1  \\
     \hline
     Small Intestine & 0 & 13 & 5 & 3 & 3 & 1 & 0.1 &  3 & 3 & 1 & 0.1 \\
     \hline
     Ureter & 126 & 30 & 14 & 18 & 13 & 4 & 0.7 &  18 & 9 & 4 & 0.7 \\
     \hline
     Urinary Bladder & 151 & 31 & 15 & 19 & 14 & 4 & 1.4 &  19 & 14 & 4 & 1.3 \\
     \hline
     Uterus & 135 & 45 & 16 & 23 & 13 & 4 & 0.3 &  25 & 13 & 4 & 0.4 \\
     \hline
     Blood & 19308 & 149 & 30 & 300 & 30 & 31 & 1.8 &  300 & 30 & 31 & 1.7 \\
     \hline
    \end{tabular}
    \end{center}
    \caption{The vertices $V_t$ are the genes/proteins and the vertices $V_b$ are the cell types. The vertices are ordered by repeatedly applying the barycentric heuristic on both sides.}
    \label{tab:green_barycentric}
\end{table}

\begin{figure}[H]
    \centering
    \includegraphics[page=2, width=\textwidth]{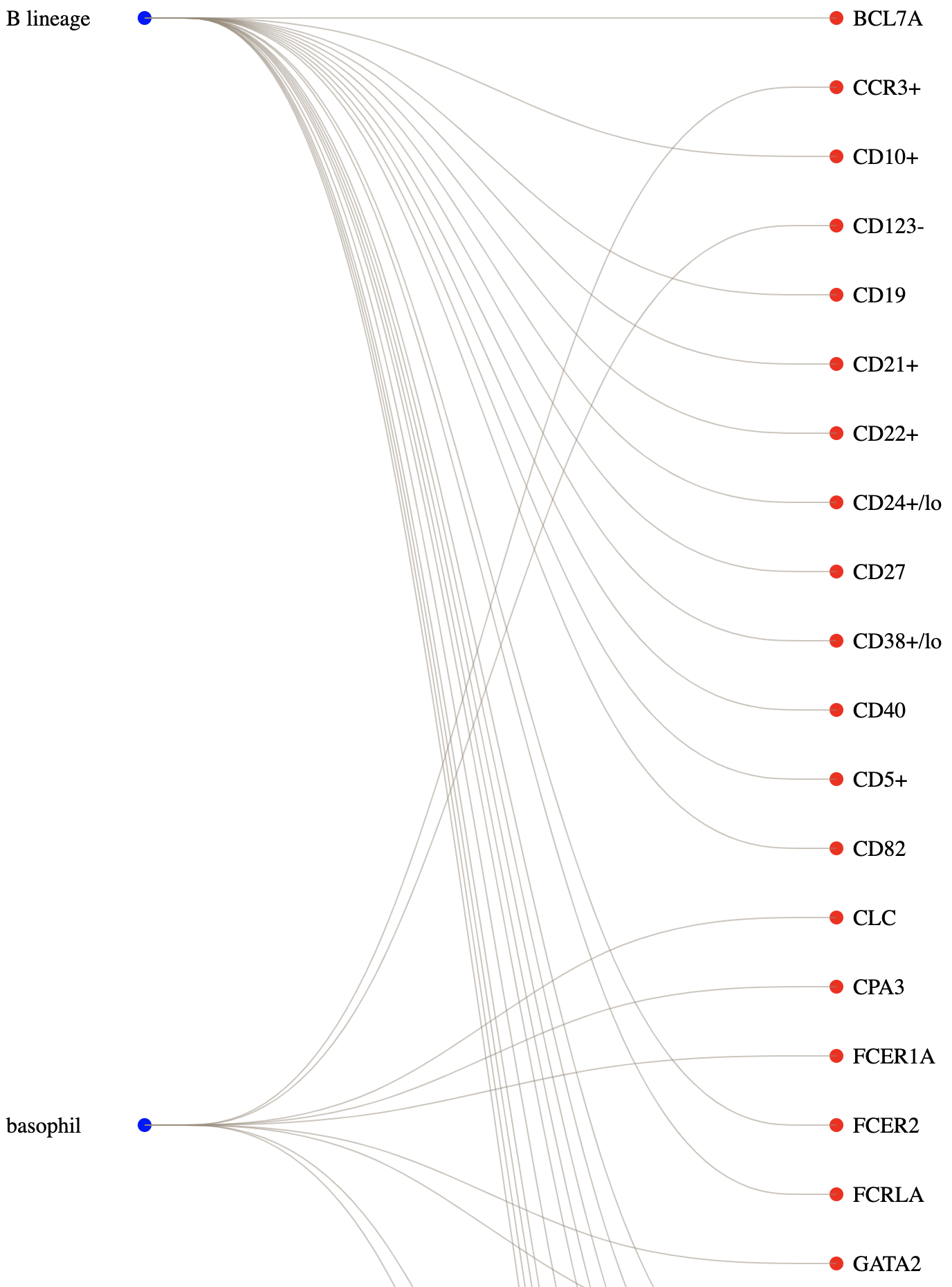}
    \caption{Part of the input layout of the 2-layer graph corresponding to the anatomical structure of blood computed by the exact algorithm for \crs{k} with Fixed Order. The graph is large, hence we show a part of the layout.}
\end{figure}

\begin{figure}[H]
    \centering
    \includegraphics[page=2, width=\textwidth]{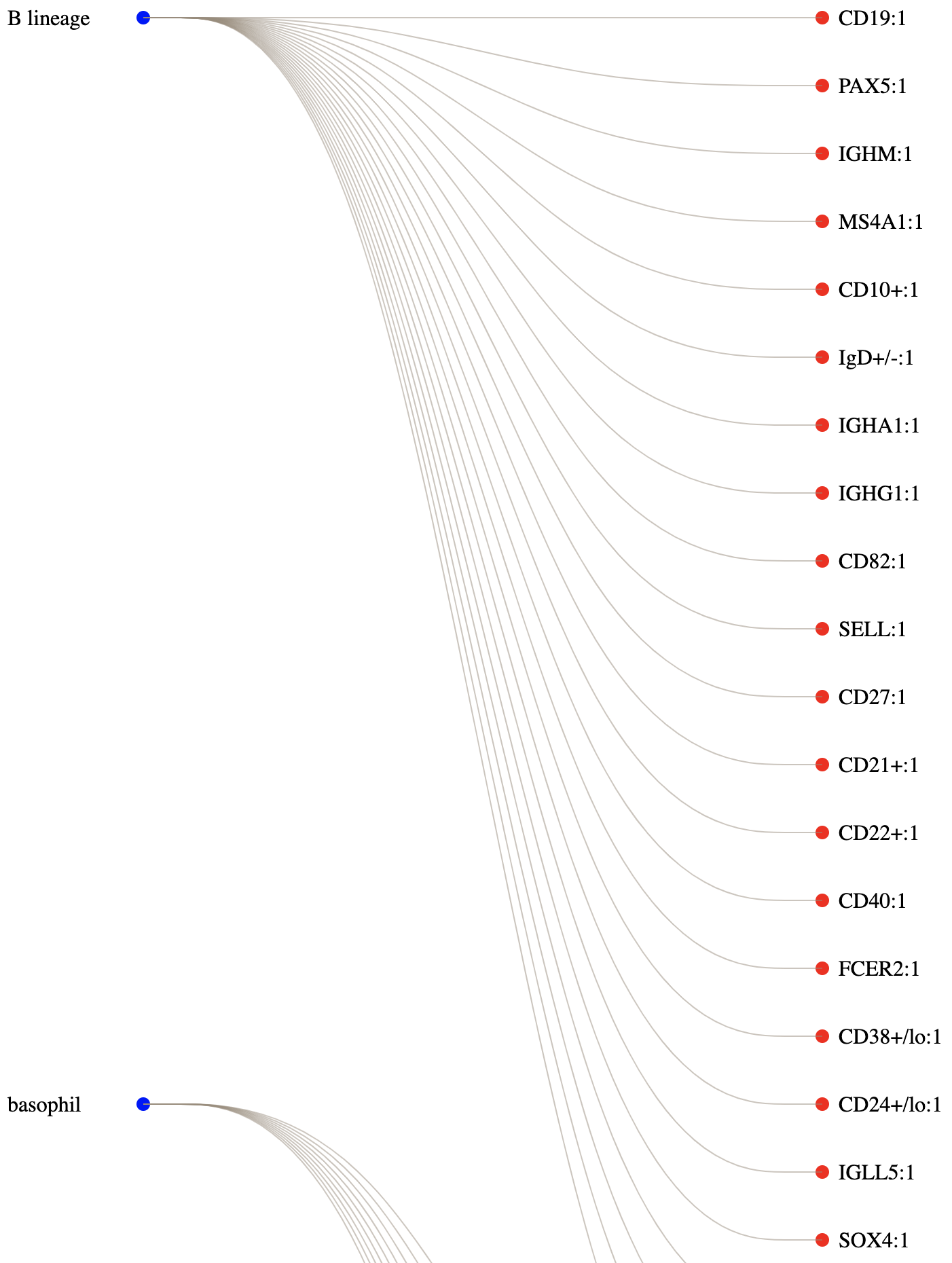}
    \caption{Part of the output layout of the 2-layer graph corresponding to the anatomical structure of blood computed by the exact algorithm for \crs{k} with Fixed Order.}
\end{figure}

\begin{figure}[H]
    \centering
    \begin{subfigure}[b]{0.475\textwidth}
        \centering
        \includegraphics[width=\textwidth]{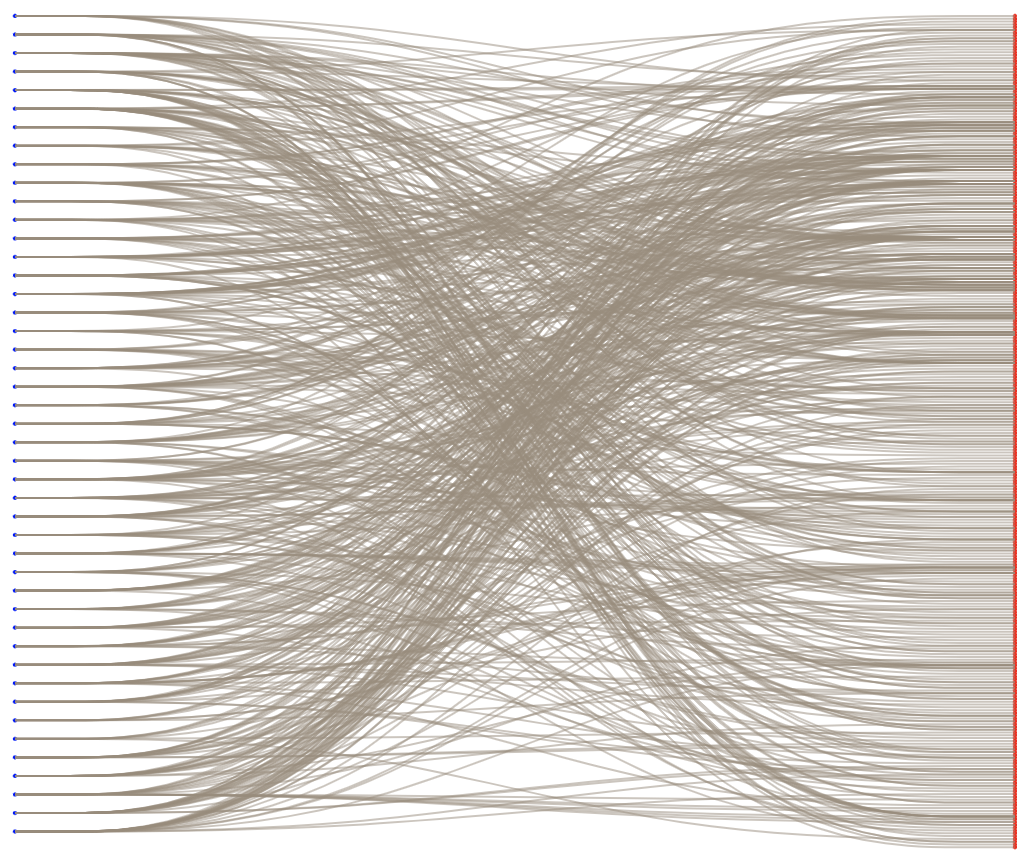}
        \caption[]%
        {{\small Input layout}}    
        \label{fig:mean and std of net14}
    \end{subfigure}
    \hfill
    \begin{subfigure}[b]{0.475\textwidth}  
        \centering 
        \includegraphics[width=\textwidth]{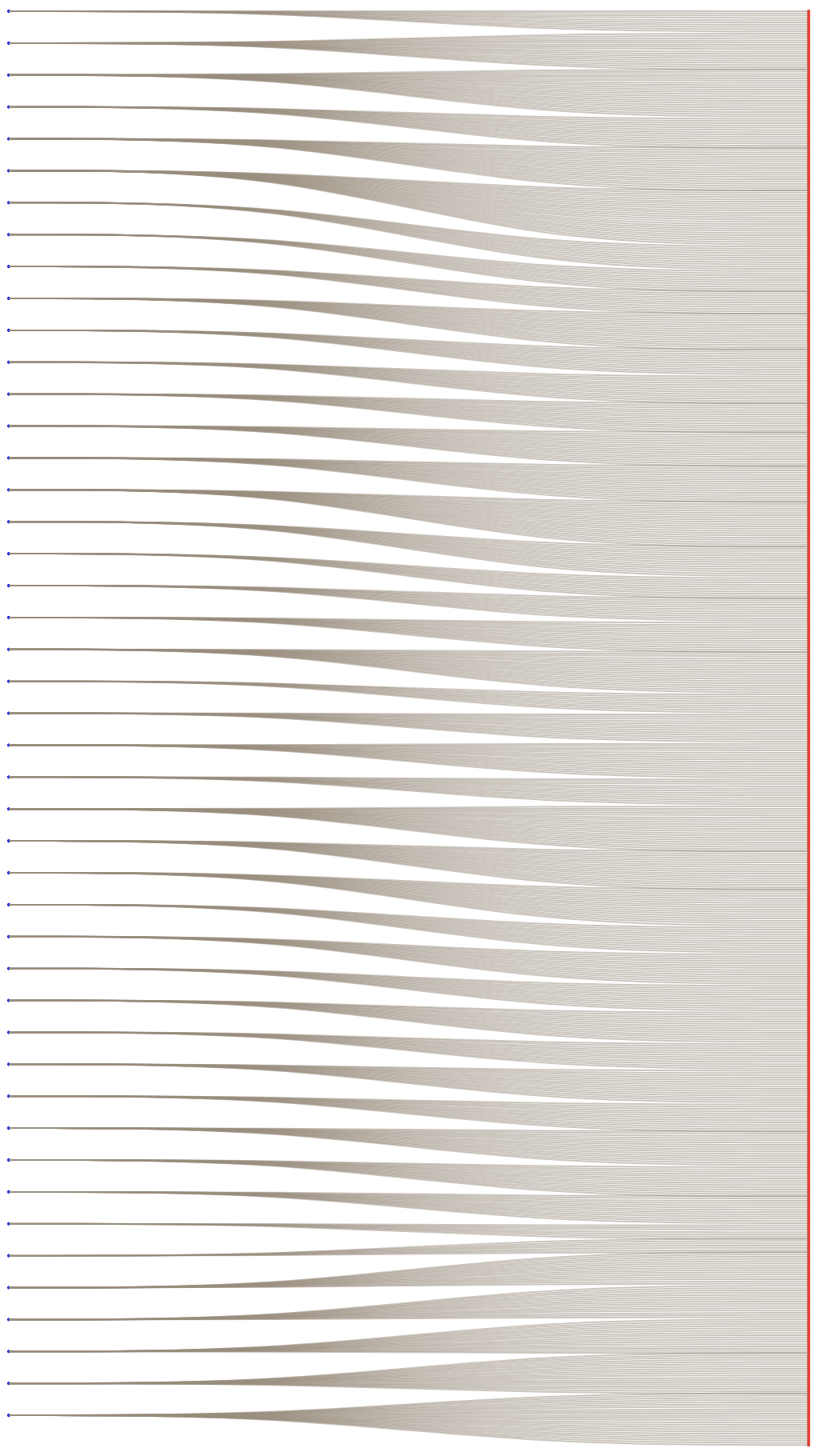}
        \caption[]%
        {{\small Output layout}}    
        \label{fig:mean and std of net24}
    \end{subfigure}
    \vskip\baselineskip
    \begin{subfigure}[b]{0.475\textwidth}   
        \centering 
        \includegraphics[width=\textwidth]{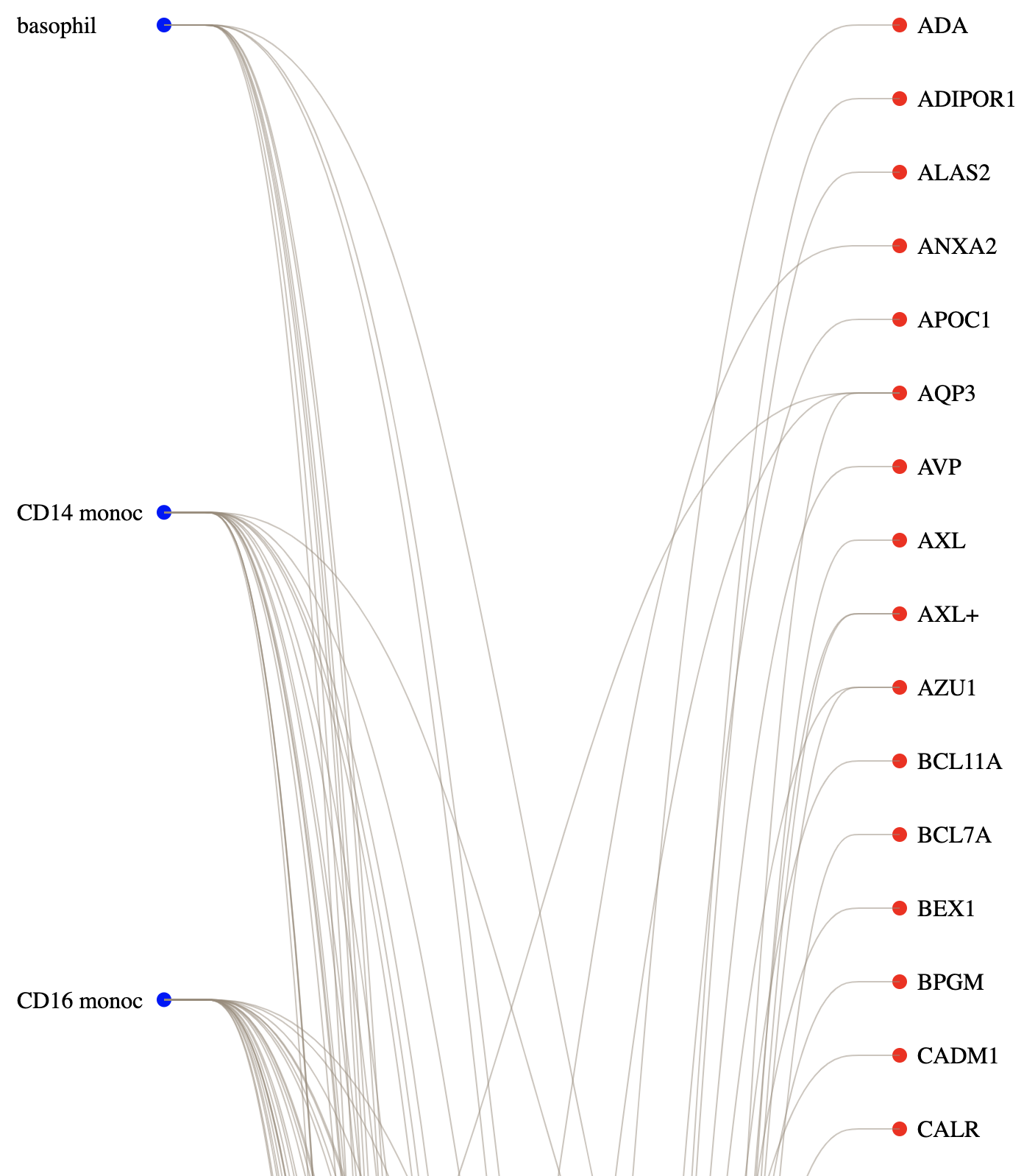}
        \caption[]%
        {{\small Zoomed in input layout}}    
        \label{fig:mean and std of net34}
    \end{subfigure}
    \hfill
    \begin{subfigure}[b]{0.475\textwidth}   
        \centering 
        \includegraphics[width=\textwidth]{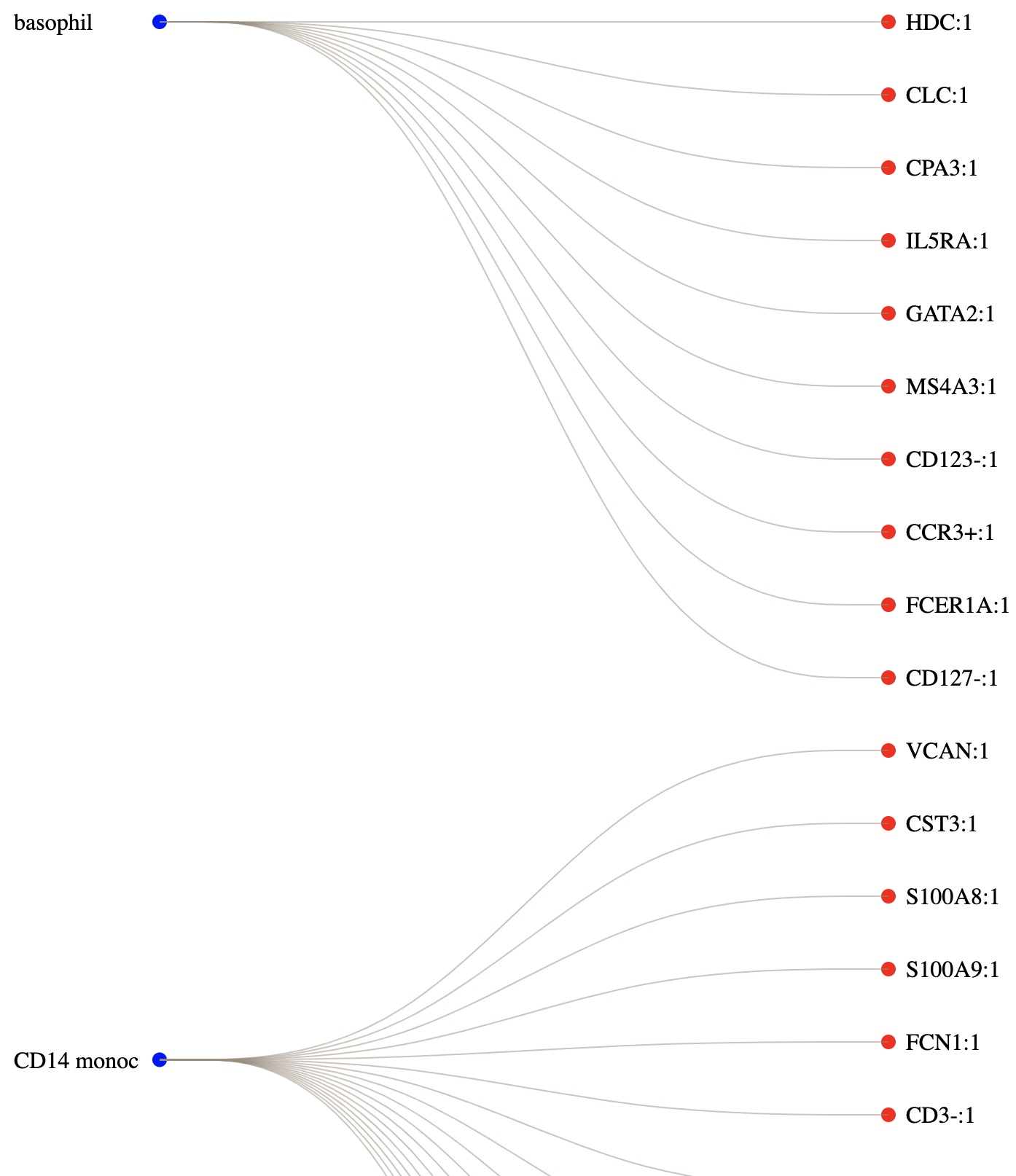}
        \caption[]%
        {{\small Zoomed in output layout}}    
        \label{fig:mean and std of net44}
    \end{subfigure}
    \caption[]
    {\small Input and output layouts of the exact algorithm of \crs{k} with Fixed Order for Bone Marrow.} 
    \label{fig:crs_bn_mr}
\end{figure}

\begin{figure}[H]
    \centering
    \begin{subfigure}[b]{0.475\textwidth}
        \centering
        \includegraphics[width=\textwidth]{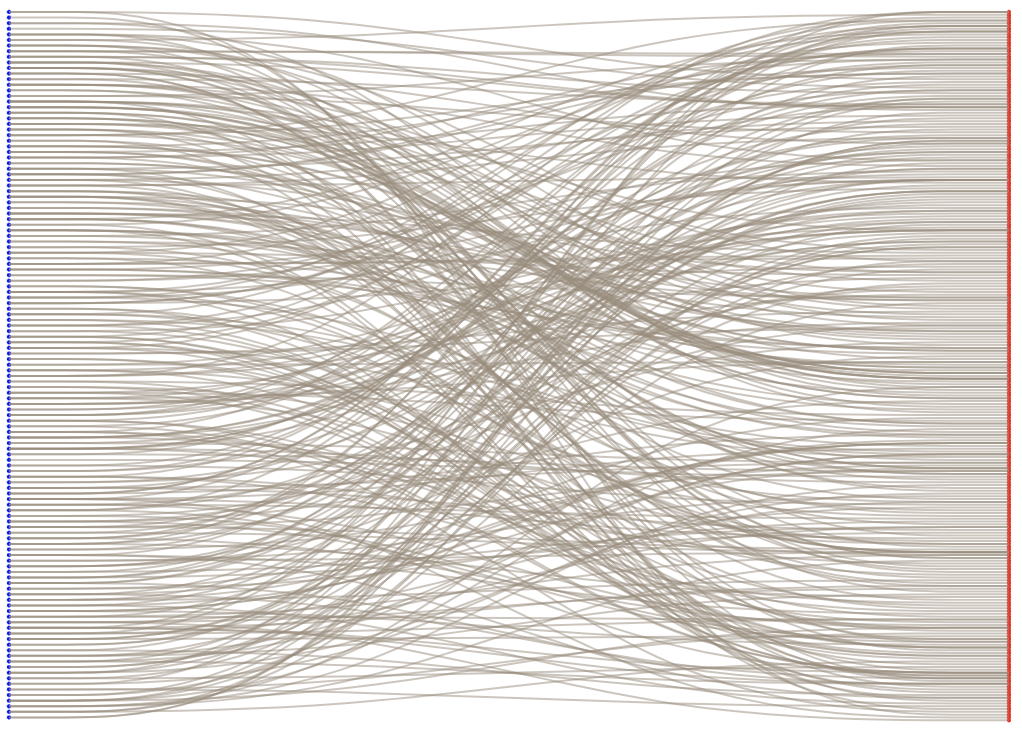}
        \caption[]%
        {{\small Input layout}}    
        %\label{fig:mean and std of net14}
    \end{subfigure}
    \hfill
    \begin{subfigure}[b]{0.475\textwidth}  
        \centering 
        \includegraphics[width=\textwidth]{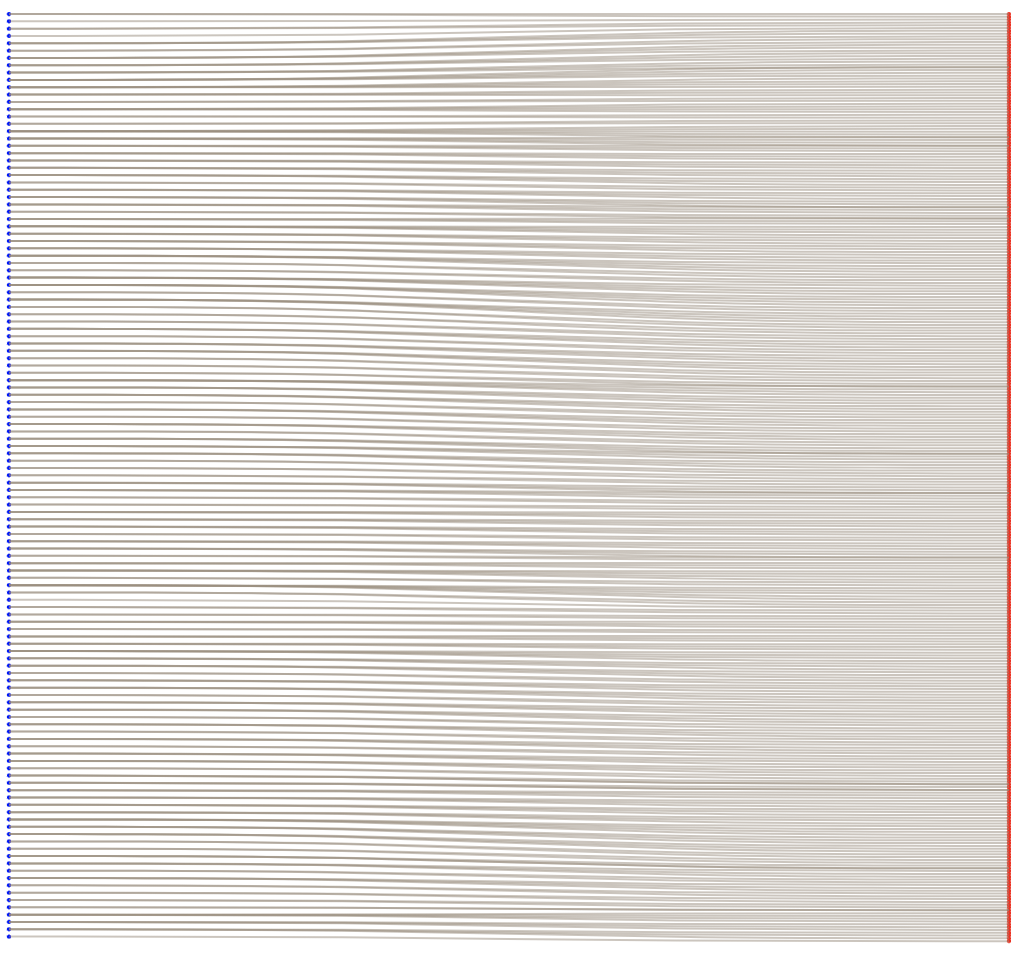}
        \caption[]%
        {{\small Output layout}}    
        %\label{fig:mean and std of net24}
    \end{subfigure}
    \vskip\baselineskip
    \begin{subfigure}[b]{0.475\textwidth}   
        \centering 
        \includegraphics[width=\textwidth]{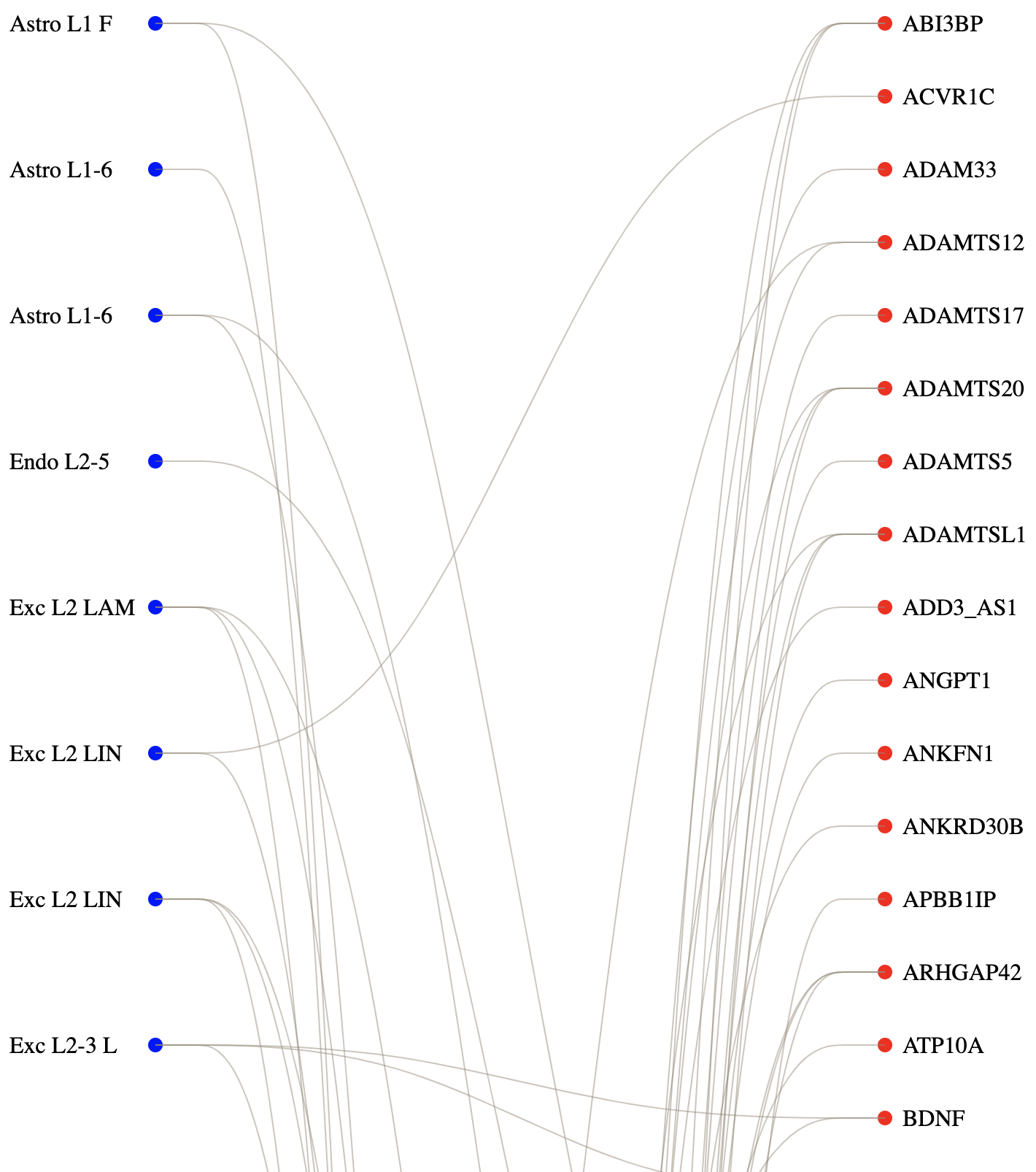}
        \caption[]%
        {{\small Zoomed in input layout}}    
        %\label{fig:mean and std of net34}
    \end{subfigure}
    \hfill
    \begin{subfigure}[b]{0.475\textwidth}   
        \centering 
        \includegraphics[width=\textwidth]{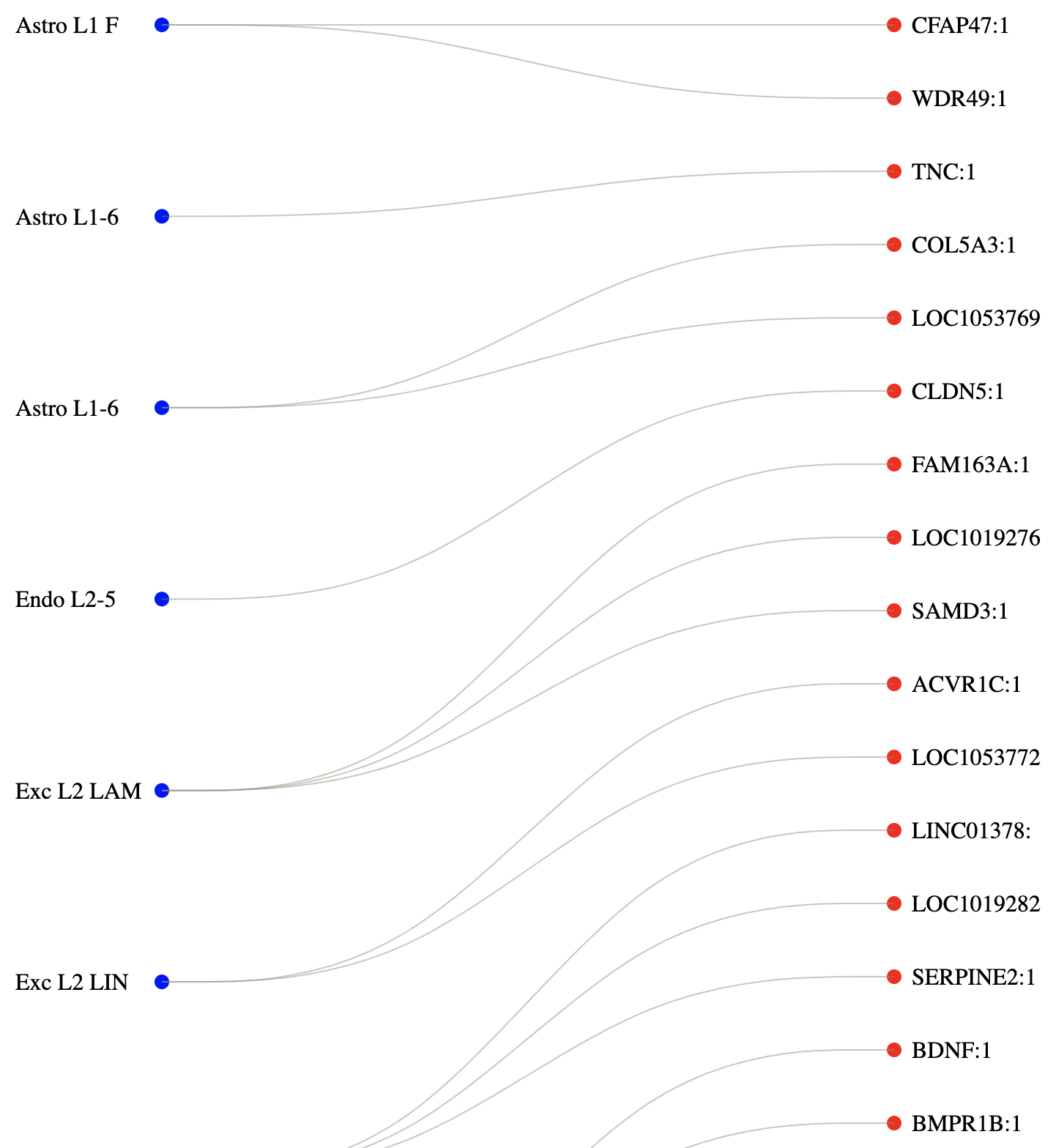}
        \caption[]%
        {{\small Zoomed in output layout}}    
        %\label{fig:mean and std of net44}
    \end{subfigure}
    \caption[]
    {\small Input and output layouts of the exact algorithm of \crs{k} with Fixed Order for Brain.} 
    \label{fig:crs_brn}
\end{figure}

\begin{figure}[H]
    \centering
    \begin{subfigure}[b]{0.475\textwidth}
        \centering
        \includegraphics[width=\textwidth]{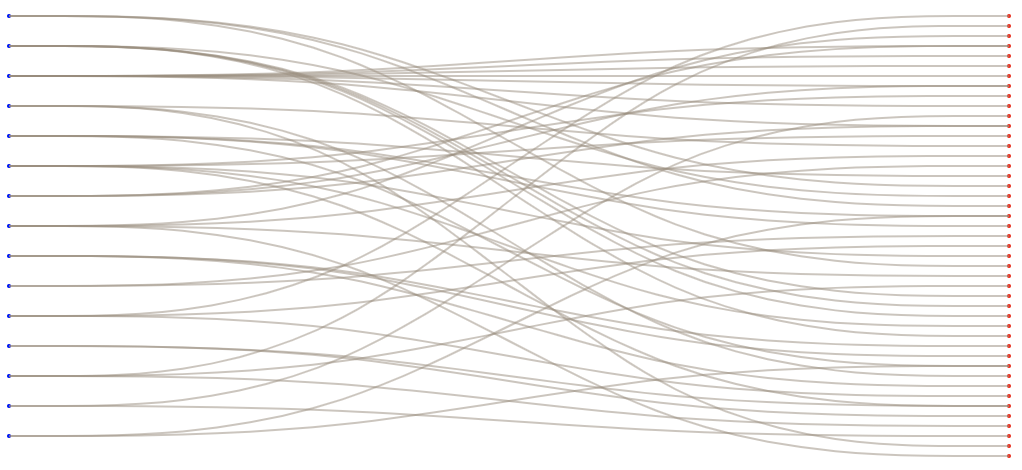}
        \caption[]%
        {{\small Input layout}}    
        %\label{fig:mean and std of net14}
    \end{subfigure}
    \hfill
    \begin{subfigure}[b]{0.475\textwidth}  
        \centering 
        \includegraphics[width=\textwidth]{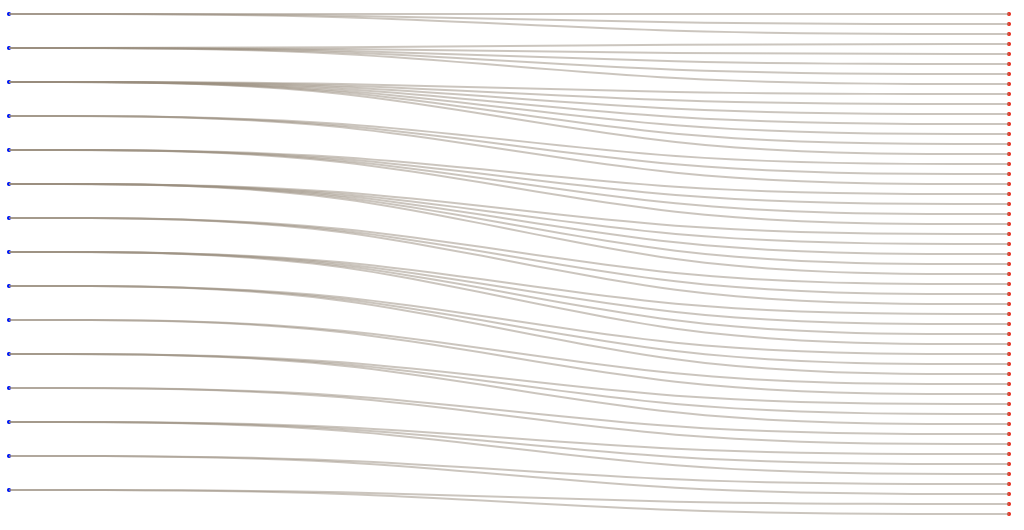}
        \caption[]%
        {{\small Output layout}}    
        %\label{fig:mean and std of net24}
    \end{subfigure}
    \vskip\baselineskip
    \begin{subfigure}[b]{0.475\textwidth}   
        \centering 
        \includegraphics[width=\textwidth]{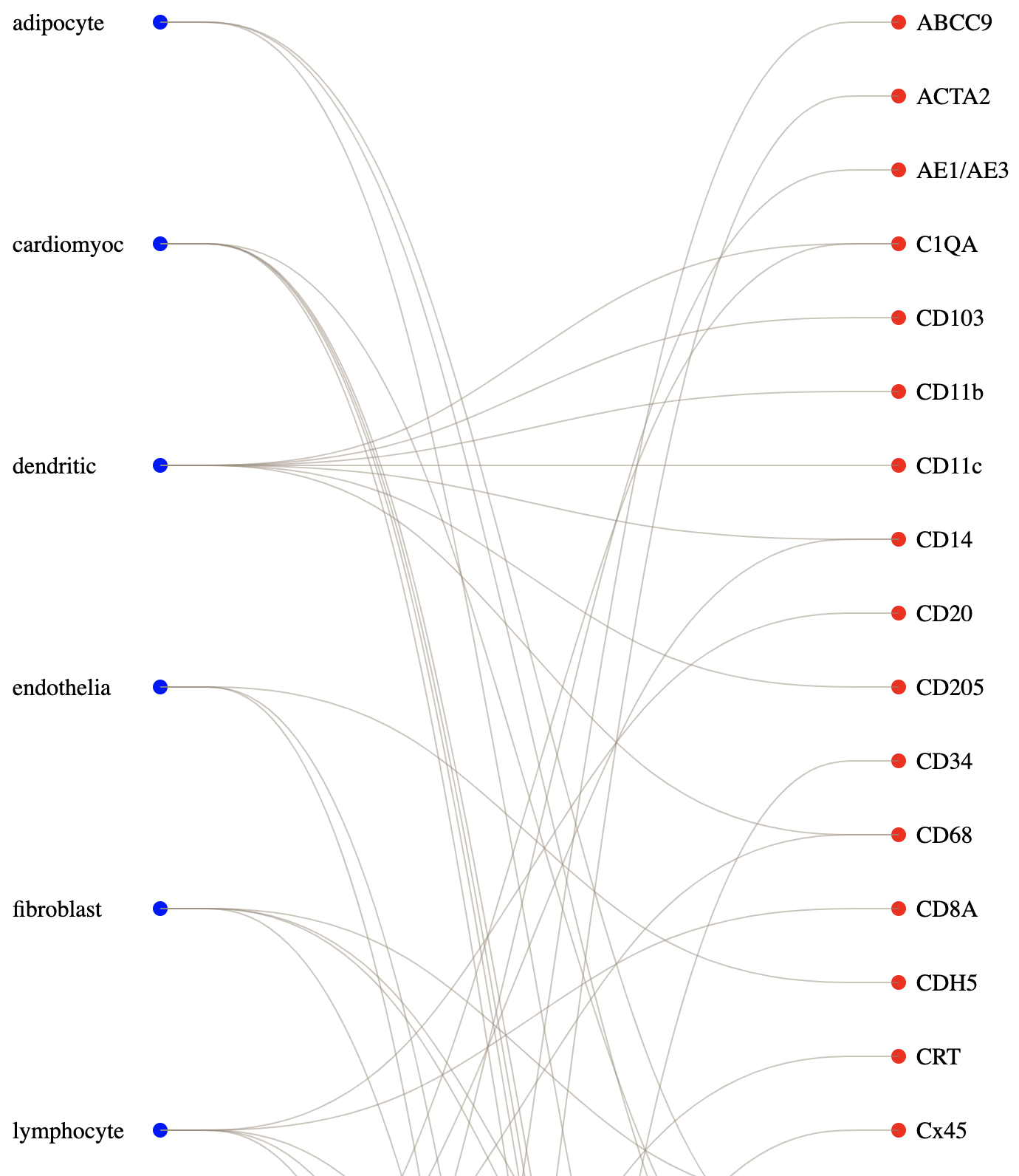}
        \caption[]%
        {{\small Zoomed in input layout}}    
        %\label{fig:mean and std of net34}
    \end{subfigure}
    \hfill
    \begin{subfigure}[b]{0.475\textwidth}   
        \centering 
        \includegraphics[width=\textwidth]{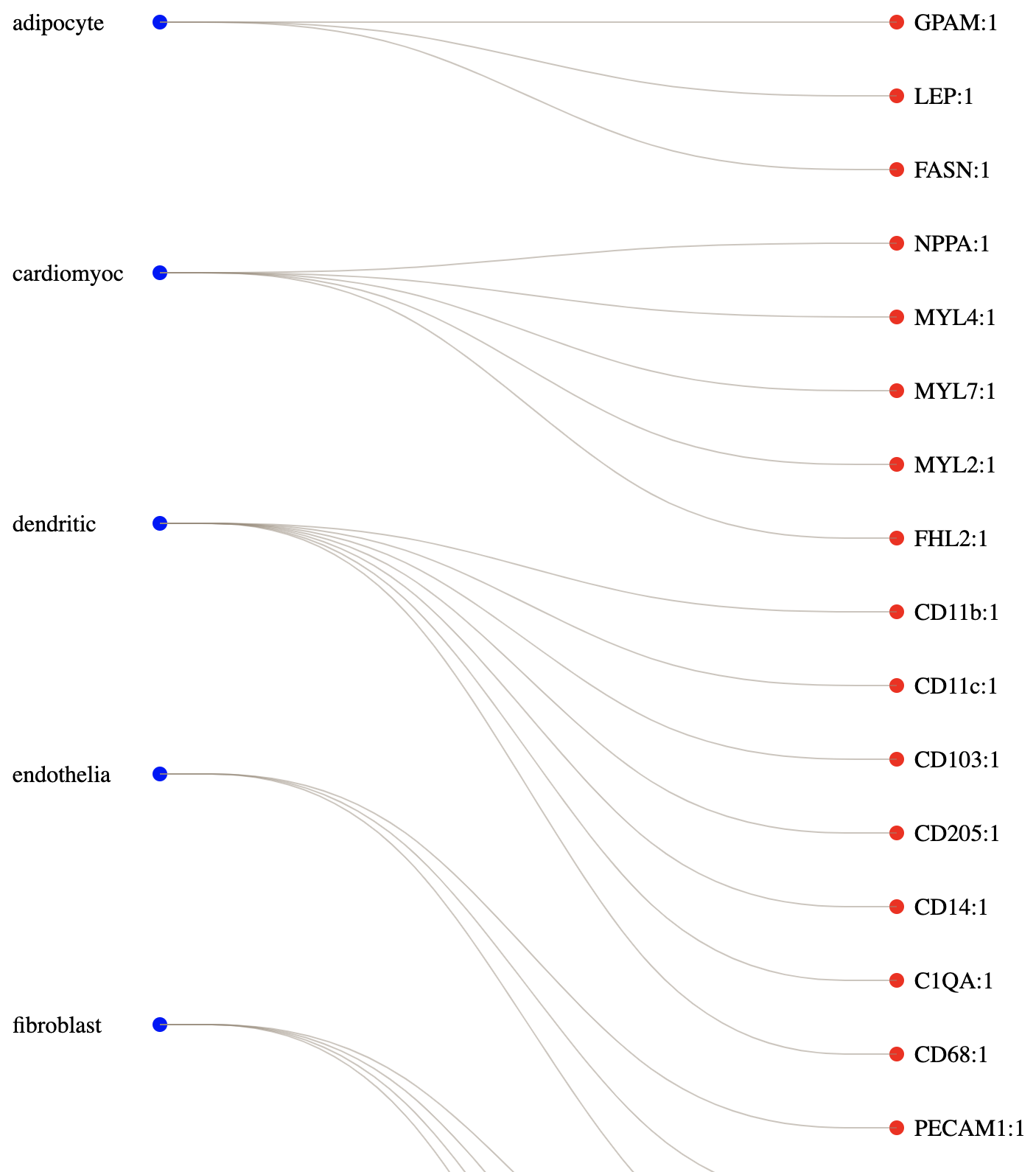}
        \caption[]%
        {{\small Zoomed in output layout}}    
        %\label{fig:mean and std of net44}
    \end{subfigure}
    \caption[]
    {\small Input and output layouts of the exact algorithm of \crs{k} with Fixed Order for Heart.} 
    \label{fig:crs_hrt}
\end{figure}

A second set of experiments was conducted on the same 22 organ graphs to evaluate the crossing minimization heuristics. We set the maximum budget $k$ of splits to $200$ and computed the number of remaining crossings after each split. Additionally, we measured wall clock time after each iteration. Figure~\ref{fig:exp_1} shows the number of crossings in regards to $k$ for one example graphs. Examples of other graphs, as well as runtime plots can be found in the supplemental material; see Fig. 6--11. For both algorithms we observed a similar performance regarding crossing reduction. In some cases one algorithm slightly outperformed the other, but no clear trend is visible in the data. Intuitively, it seems that in the case of \textit{max-span} the length of edges correlates with the number of crossings. Furthermore, the number of crossings declines steeply at the beginning and for some graphs nearly 30\% of crossings are removed by the first 10 splits.

The runtime experiments confirmed the asymptotic runtime analysis. \textit{max-span} outperforms \textit{CR-count} on every datasets in regards to total runtime and 
scalability. 

\begin{comment}
\begin{figure}[H]
    \centering
    \includegraphics[width=\linewidth]{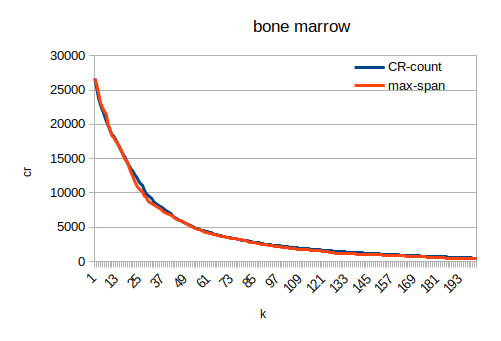}
    \caption{\textit{max-span} and \textit{CR-count} heuristic applied to the bone marrow dataset. Both algorithms have nearly identical performance regarding crossing reduction.}
    \label{fig:heuristics}
\end{figure}
\end{comment}

\begin{figure}[H]
    \centering
    \begin{subfigure}[b]{0.47\textwidth}
    \includegraphics[width=\textwidth]{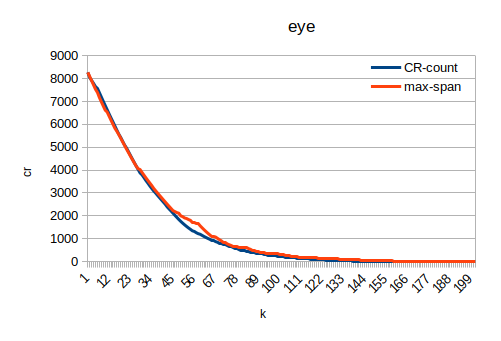}
    \caption{}
    \end{subfigure}
    \begin{subfigure}[b]{0.47\textwidth}
    \includegraphics[width=\textwidth]{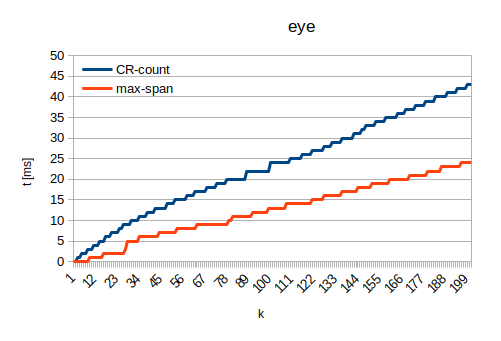}
    \caption{}
    \end{subfigure}
    \caption{\textit{max-span} and \textit{CR-count} heuristic applied to the eye dataset. Both algorithms have nearly identical performance regarding crossing reduction as seen in (a). However, \textit{max-span} outperforms \textit{CR-count} regarding runtime (b).}
    \label{fig:exp_1}
\end{figure}

\begin{figure}[H]
    \centering
    \begin{subfigure}[b]{0.47\textwidth}
    \includegraphics[width=\textwidth]{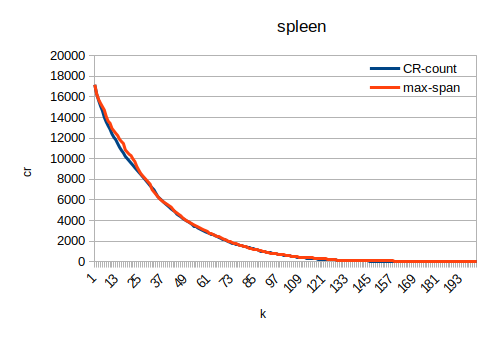}
    \caption{}
    \end{subfigure}
    \begin{subfigure}[b]{0.47\textwidth}
    \includegraphics[width=\textwidth]{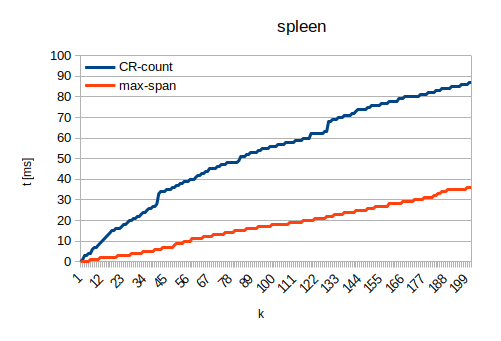}
    \caption{}
    \end{subfigure}
    \caption{\textit{max-span} and \textit{CR-count} heuristic applied to the spleen dataset. Both algorithms have nearly identical performance regarding crossing reduction as seen in (a). However, \textit{max-span} outperforms \textit{CR-count} regarding runtime (b).}
    \label{fig:exp_2}
\end{figure}

\begin{figure}[H]
    \centering
    \begin{subfigure}[b]{0.47\textwidth}
    \includegraphics[width=\textwidth]{bone_marrow.png}
    \caption{}
    \end{subfigure}
    \begin{subfigure}[b]{0.47\textwidth}
    \includegraphics[width=\textwidth]{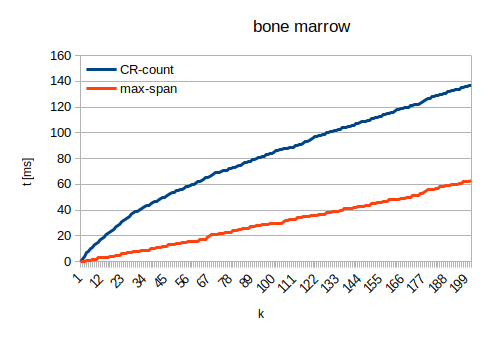}
    \caption{}
    \end{subfigure}
    \caption{\textit{max-span} and \textit{CR-count} heuristic applied to the bone marrow dataset. Both algorithms have nearly identical performance regarding crossing reduction as seen in (a). However, \textit{max-span} outperforms \textit{CR-count} regarding runtime (b).}
    \label{fig:exp_3}
\end{figure}

\begin{figure}[H]
    \centering
    \begin{subfigure}[b]{0.47\textwidth}
    \includegraphics[width=\textwidth]{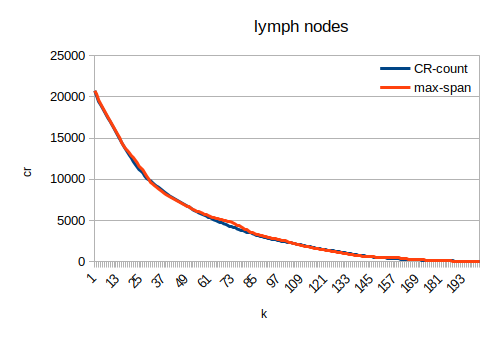}
    \caption{}
    \end{subfigure}
    \begin{subfigure}[b]{0.47\textwidth}
    \includegraphics[width=\textwidth]{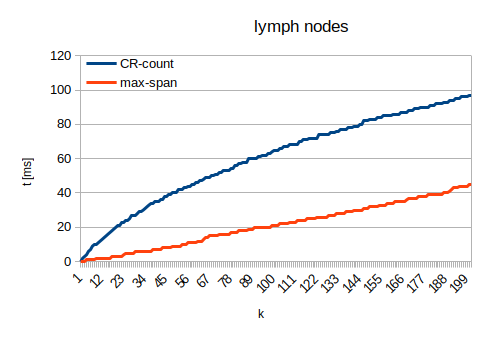}
    \caption{}
    \end{subfigure}
    \caption{\textit{max-span} and \textit{CR-count} heuristic applied to the lymph nodes dataset. Both algorithms have nearly identical performance regarding crossing reduction as seen in (a). However, \textit{max-span} outperforms \textit{CR-count} regarding runtime (b).}
    \label{fig:exp_4}
\end{figure}

\begin{figure}[H]
    \centering
    \begin{subfigure}[b]{0.47\textwidth}
    \includegraphics[width=\textwidth]{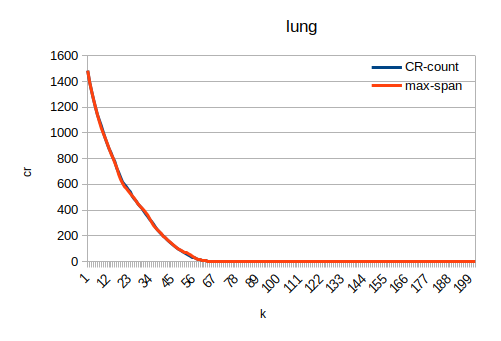}
    \caption{}
    \end{subfigure}
    \begin{subfigure}[b]{0.47\textwidth}
    \includegraphics[width=\textwidth]{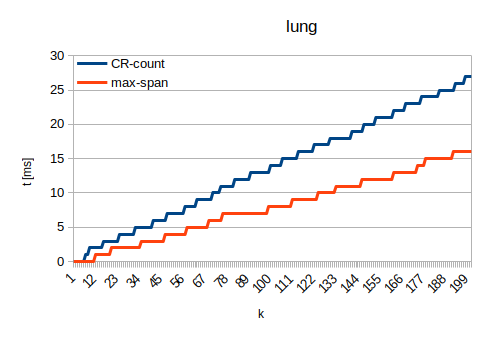}
    \caption{}
    \end{subfigure}
    \caption{\textit{max-span} and \textit{CR-count} heuristic applied to the lung nodes dataset. Both algorithms have nearly identical performance regarding crossing reduction as seen in (a). However, \textit{max-span} outperforms \textit{CR-count} regarding runtime (b).}
    \label{fig:exp_5}
\end{figure}

\begin{figure}[H]
    \centering
    \begin{subfigure}[b]{0.47\textwidth}
    \includegraphics[width=\textwidth]{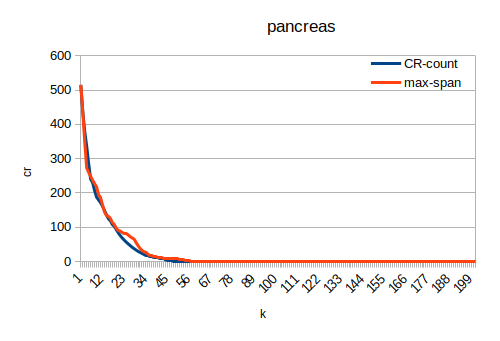}
    \caption{}
    \end{subfigure}
    \begin{subfigure}[b]{0.47\textwidth}
    \includegraphics[width=\textwidth]{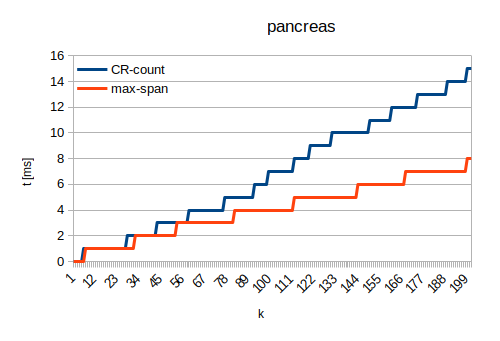}
    \caption{}
    \end{subfigure}
    \caption{\textit{max-span} and \textit{CR-count} heuristic applied to the pancrease nodes dataset. Both algorithms have nearly identical performance regarding crossing reduction as seen in (a). However, \textit{max-span} outperforms \textit{CR-count} regarding runtime (b).}
    \label{fig:exp_6}
\end{figure}

All codes for user interface, algorithms, experimental data, and analysis are available on Github at \url{https://github.com/abureyanahmed/split_graphs}.

\section{Open Problems}
\label{sec:conclusions}

Minimizing the total number of splits, or the number of split vertices are natural problems. Other variants include minimizing the maximum number of splits per vertex and considering the case  %could also study the case 
where splits are allowed in both layers.
%Besides removing crossings, 
Vertex splits can also be used to improve other quality measures of a 2-layer layout (besides crossings). When visualizing large bipartite graphs, a natural goal is to arrange the vertices so that a small window can capture all the neighbors of a given node, i.e., minimize the maximum distance between the first and last neighbors of a top vertex in the order of the bottom vertices.
%, given a budget of $k$ vertex-splits on the bottom layer. 
%For example, motivated from the visualization of anatomical structures and cell types, we propose to study a setting in which crossings are allowed but we aim for compact neighborhood ranges. 
%The goal is to minimize the maximum \emph{span} of the top vertices, i.e., the distance between the first and the last neighbor of a top vertex in the order of the bottom vertices, given a budget of $k$ vertex-splits on the bottom layer. 
%We define this problem in the appendix, where we also show \NP-completeness for one variant.
%when $k=0$.

%On a related note, I (sgk) suggest that we organize the paper by the type of optimization (currently two variants):
%Optimization1: minimize the total number of splits; Optimization2: minimize the number of split vertices. In both settings we have two results (one NP hardness and one poly-time algorithm, depending on whether one side is fixed or whether both sides are permutable).

Since a great deal of vertex splitting can dramatically change the structure of the graph, it is desirable to have a bound on the number of splits. Hence, in our problems, we take a parameter $k$ that limits the number of splits. One can also consider the problem of finding the minimum number of splits to generate a planar graph. This is known as the Planar Split Thickness of Graphs~\cite{ekkllm-pstg-18}. It remains an interesting future research direction to find the trade-off between the number of splits and the number of crossings as the graph size changes.

%\begin{nolinenumbers}
\subsection{Acknowledgments.} This work started at Dagstuhl Seminar 21152 ``Multi-Level Graph Representation for Big Data Arising in Science Mapping''. We thank the organizers and participants for the discussions, particularly  C.~Raftopoulou.
%\end{nolinenumbers}

\bibliographystyle{splncs04}
\bibliography{bib}

%\clearpage
%\appendix

\end{document}